\tikzstyle{vertex}=[circle, draw, inner sep=0pt, minimum size=4pt, fill = black]
\newcommand{\multiline}[1]{%
  \begin{tabularx}{\dimexpr\linewidth-\ALG@thistlm}[t]{@{}X@{}}
    #1
  \end{tabularx}
}
\def\BState{\State\hskip-\ALG@thistlm}
\newcommand{\ceil}[1]{\lceil #1 \rceil}
\newcommand{\floor}[1]{\lfloor #1 \rfloor}
\titlespacing{\section}{0pt}{3ex}{2ex}
\titlespacing{\subsection}{0pt}{2ex}{1ex}
\titlespacing{\subsubsection}{0pt}{0.5ex}{0ex}
\newtheorem{theorem}{Theorem}[section]
\newtheorem{corollary}[theorem]{Corollary}
\newtheorem{lemma}[theorem]{Lemma}
\newtheorem{claim}[theorem]{Claim}
\newtheorem{problem}[theorem]{Problem}
\let\c@fconjecture\c@conjecture
\let\c@fconj\c@conj
\def \eps {\varepsilon}
\newcommand{\ignore}[1]{}
\def \polylog { \text{\rm polylog~} }
\def\tO{\tilde{O}}
\newcommand{\D}{\Delta}
\newcommand{\OO}{\tilde{O}}
\newcommand{\MMM}{{\cal M}}
\newcommand{\MM}{{\cal M}^\star}
\newcommand{\MMp}{{\mbox{\rm M}^\star}}
\newcommand{\IGNORE}[1]{}
\newcommand{\pmM}{$[\pm \ccc]$}
\newcommand{\ppM}{$[\ccc]$}
\newcommand{\pnzM}{$([\ccc]-\{0\})$}
\newcommand{\ccc}{c_0}
\newcommand{\DD}[1]{D_2^{(#1)}}
\newcommand{\CountCap}[1]{$\#_{\le #1}$APSP}
\newcommand{\CountMod}[1]{$\#_{\mbox{\scriptsize\rm mod}\, #1}$APSP}
\newcommand{\CountApx}[1]{$\#_{\mbox{\scriptsize\rm approx-}#1}$APSP}
\newcommand{\RedAPSP}[1]{u-$#1$Red-APSP}
\newcommand{\wmax}{M}
\DeclareMathOperator*{\argmin}{arg\,min}
\title{Algorithms, Reductions and Equivalences for Small Weight Variants of All-Pairs Shortest Paths}
\author{Timothy M. Chan\\UIUC\\tmc@illinois.edu \and Virginia {Vassilevska Williams}\\MIT\\virgi@mit.edu \and Yinzhan Xu\\MIT\\xyzhan@mit.edu}
\begin{document}
\date{}

\maketitle

\begin{abstract}
All-Pairs Shortest Paths (APSP) is one of the most well studied problems in graph algorithms. This paper studies several variants of APSP in unweighted graphs or graphs with small integer weights.

APSP with small integer weights in undirected graphs [Seidel'95, Galil and Margalit'97] has an $\tO(n^\omega)$ time algorithm, where $\omega<2.373$ is the matrix multiplication exponent. APSP in directed graphs with small weights however, has a much slower running time that would be $\Omega(n^{2.5})$ even if $\omega=2$ [Zwick'02]. To understand this $n^{2.5}$ bottleneck, we build a web of reductions around directed unweighted APSP\@. We show that it is {\em fine-grained equivalent} to computing a rectangular Min-Plus product for matrices with integer entries; the dimensions and entry size of the matrices depend on the value of $\omega$. As a consequence, we establish an equivalence between APSP in directed unweighted graphs, APSP in directed graphs with small $(\OO(1))$ integer weights, All-Pairs Longest Paths in DAGs with small weights, $c$Red-APSP in undirected graphs with small weights, for any $c\geq 2$
(computing all-pairs shortest path distances among paths that use at most $c$ red edges), $\#_{\le c}$APSP in directed graphs with small weights (counting the number of shortest paths for each vertex pair, up to $c$), and approximate APSP with additive error $c$ in directed graphs with small weights, for $c\le \OO(1)$. 

We also provide fine-grained reductions from directed unweighted APSP to All-Pairs Shortest Lightest Paths (APSLP) in undirected graphs with $\{0,1\}$ weights and $\#_{\mbox{\scriptsize mod}\, c}$APSP in directed unweighted graphs (computing counts mod $c$), thus showing that unless the current algorithms for APSP in directed unweighted graphs can be improved substantially, these problems need at least $\Omega(n^{2.528})$ time.

We complement our hardness results with new algorithms. We improve the known algorithms for APSLP in directed graphs with small integer weights (previously studied by Zwick [STOC'99]) and for approximate APSP with sublinear additive error in directed unweighted graphs (previously studied by Roditty and Shapira [ICALP'08]).
Our algorithm for approximate APSP with sublinear additive error is optimal, when viewed as a reduction to Min-Plus product.  We also give new algorithms for variants of \#APSP (such as $\#_{\le U}$APSP and $\#_{\mbox{\scriptsize mod}\, U}$APSP for $U\le n^{\OO(1)}$) in unweighted graphs, as well as a near-optimal $\OO(n^3)$-time algorithm for the original \#APSP problem in unweighted graphs (when counts may be exponentially large). This also implies an $\OO(n^3)$-time algorithm for Betweenness Centrality, improving on the previous $\OO(n^4)$ running time for the problem. 
Our techniques also lead to a simpler alternative to Shoshan and Zwick's algorithm [FOCS'99] for the original APSP problem in undirected graphs with small integer weights.
\end{abstract}

\section{Introduction}
\emph{All-Pairs Shortest Paths (APSP)} is one of the oldest and most studied problems in graph algorithms. The fastest known algorithm for general $n$-node  graphs runs in $n^3/2^{\Theta(\sqrt{\log n})}$ \cite{ryanapsp}. In unweighted graphs, or graphs with small integer weights, faster algorithms are known. 

For APSP in {\em undirected} unweighted graphs (u-APSP), Seidel \cite{Seidel} and
Galil and Margalit~\cite{GalilMargalitAPSD,GalilMargalitAPSP}
gave  $\tilde{O}(n^\omega)$ time algorithms where $\omega\leq 2.373$ is the exponent of matrix multiplication \cite{almanvw21,vstoc12,legallmult}; the latter algorithm 
works for graphs with small integer weights\footnote{
In this paper, $[\pm c_0]=\{-c_0,\ldots,c_0\}$ and $[c_0]=\{0,\ldots,c_0\}$.
The $\OO$ notation hides polylogarithmic factors (although
conditions of the form $\ccc=\OO(1)$
may be relaxed to $\ccc\le n^{o(1)}$ if we allow extra $n^{o(1)}$ factors in the $\OO$ time bounds).
} 
in \pmM\ for
$\ccc=\OO(1)$.
The hidden dependence on $\ccc$
was improved by
Shoshan and Zwick \cite{shoshanzwick}.

For {\em directed} unweighted graphs or graphs with weights in \pmM, the fastest APSP algorithm is by Zwick \cite{zwickbridge}, running in $O(n^{2.529})$ time. This running time is achieved using the best known bounds for rectangular matrix multiplication \cite{legallurr} and would be $\Omega(n^{2.5})$ even if $\omega=2$.

There is a big discrepancy between the running times for undirected and directed APSP\@. One might wonder, why is this? Are directed graphs inherently more difficult for APSP, or is there some special graph structure we can uncover and then use it to develop an $\tilde{O}(n^\omega)$ time algorithm for directed APSP as well? (Note that matrix multiplication seems necessary for APSP since APSP is known to capture Boolean matrix multiplication.)

The first contribution in this paper is a fine-grained equivalence between directed unweighted APSP (u-APSP) and a certain rectangular version of the Min-Plus product problem. 

The \emph{Min-Plus product} of an $n\times m$ matrix $A$ by an $m\times p$ matrix $B$ is the matrix $C$ with entries $C[i,j]=\min_{k=1}^m (A[i,k]+B[k,j])$. 
Let us denote by $\MMp(n_1,n_2,n_3 \mid M)$ the problem of computing the Min-Plus product of an $n_1\times n_2$ matrix by an $n_2\times n_3$ matrix where both matrices have integer entries in $[M]$. Let $\MM(n_1,n_2,n_3 \mid M)$ be the best running time for $\MMp(n_1,n_2,n_3 \mid M)$.

Zwick's algorithm~\cite{zwickbridge} for u-APSP can be viewed as making a logarithmic number of calls to the Min-Plus product 
$\MMp(n,n/L,n \mid L)$ for all $1\leq L\leq n$ that are powers of $3/2$. The running time of Zwick's algorithm is thus, within polylogarithmic factors, $\max_{L} \MM(n,n/L,n \mid L)$.

Let $\MMM(a,b,c)$ denote the running time of the fastest algorithm to multiply an $a\times b$ by a $b\times c$ matrix over the integers. Let $\omega(a,b,c)$ be the smallest real number $r$ such that $\MMM(n^a,n^b,n^c)\leq O(n^{r+\eps})$ for all $\eps>0$.


The best known upper bound for the Min-Plus product running time $\MM(n,n/L,n \mid L)$ is the minimum of $O(n^3/L)$ (the brute force algorithm) and $\tilde{O}(L\cdot \MMM(n,n/L,n))$ \cite{AlonGalilMargalit}. For $L=n^{1-\ell}$,  $\MM(n,n/L,n \mid L)$ is thus at most $\tilde{O}(\min\{n^{2+\ell},n^{1-\ell+\omega(1,\ell,1)}\})$. 
Over all $\ell\in [0,1]$, the runtime is maximized at $\tilde{O}(n^{2+\rho})$ where $\rho$ is such that $\omega(1,\rho,1)=1+2\rho$.


Hence in particular, the running time of Zwick's algorithm is $\tilde{O}(n^{2+\rho})$. 
This running time has remained unchanged (except for improvements on the bounds on $\rho$) for almost 20 years. The current best known bound on $\rho$ is $\rho<0.529$, and if $\omega=2$, then $\rho=1/2$.

Our first result is that u-APSP is sub-$n^{2+\rho}$ fine-grained equivalent to $\MMp(n,n^\rho,n \mid n^{1-\rho})$:
\begin{theorem}\label{thm:equiv1}
If $\MMp(n,n^\rho,n \mid n^{1-\rho})$ is in $O(n^{2+\rho-\eps})$ time for $\eps>0$, then u-APSP can also be solved in $O(n^{2+\rho-\eps'})$ time for some $\eps'>0$.
If u-APSP can can be solved in $O(n^{2+\rho-\eps})$ time for some $\eps>0$, then $\MMp(n,n^\rho,n \mid n^{1-\rho})$ can also be solved in $O(n^{2+\rho-\eps})$ time.
\end{theorem}

The Min-Plus product of two $n\times n$ matrices with {\em arbitrary} integer entries is known to be equivalent to APSP with {\em arbitrary} integer entries \cite{fischermeyer}, so that their running times are the same, up to constant factors. 
All known algorithms for directed unweighted APSP (including \cite{zwickbridge,AlonGalilMargalit} and others), make calls to Min-Plus product of rectangular matrices with integer entries that can be as large as say $n^{0.4}$.
It is completely unclear, however, why a problem in {\em unweighted} graphs such as u-APSP should require the computation of Min-Plus products of matrices with such large entries. Theorem~\ref{thm:equiv1} surprisingly shows that it does. Moreover, it shows that unless we can improve upon the known approaches for Min-Plus product computation, there will be no way to improve upon Zwick's algorithm for u-APSP\@. The latter is an algebraic problem in disguise.

The main proof of Theorem~\ref{thm:equiv1} is  simple---what is remarkable are the numerous consequences on equivalences and  conditional hardness that follow from this idea.
We first use Theorem~\ref{thm:equiv1} to build a class of problems that are all equivalent to u-APSP, via $(n^{2+\rho},n^{2+\rho})$-fine-grained reductions (see \cite{vsurvey} for a survey of fine-grained complexity). In particular, if $\omega=2$ (or more generally when $\omega(1, \frac{1}{2}, 1) = 2$), these are all problems that are $n^{2.5}$-fine-grained equivalent.

\IGNORE{
Here is some APSP-related problems we consider:
\begin{itemize}
    \item All-Pairs Longest Paths (APLP): output for every pair of vertices $s,t$ the weight of the longest path from $s$ to $t$ in $G$;
    \item \#APSP: output for every pair of vertices $s,t$ the number $C[s,t]$ of shortest paths from $s$ to $t$;
    \item \#$_{\le c}$APSP: similar to \#APSP, except that we return $\max\{C[s,t],c\}$  (think of $c$ as a cap); note that for $c=2$, this corresponds to testing uniqueness of the shortest path for each pair;
    \item \RedAPSP{c}: for a graph in which some edges are colored red and others are colored blue,  output for every pair of vertices $s,t$ the weight of the shortest path from $s$ to $t$ that uses at most $c$ red edges.
\end{itemize}
}

Recall that in the
\emph{All-Pairs Longest Paths (APLP)} problem, we want to output for every pair of vertices $s,t$ the weight of the longest path from $s$ to $t$. While APLP is NP-hard in general, it is efficiently solvable in DAGs.
In the
\emph{$c$Red-APSP} 
problem, for a given graph in which some edges can be colored red, we want to output for every pair of vertices $s,t$ the weight of the shortest path from $s$ to $t$ that uses at most $c$ red edges. For convenience, we call all non-red edges blue. 

We use the following convention for problem names:
the prefix ``u-'' is for unweighted graphs;
the prefix ``\ppM-'' is for graphs with weights in \ppM\ (similarly for ``\pmM-'' and for other ranges).
Input graphs are directed unless stated otherwise.

\begin{theorem}
\label{thm:intro:equiv2}
The following problems either all have $O(n^{2+\rho-\eps})$ time algorithms for some $\eps>0$, or none of them do, assuming
that $c_0=\OO(1)$:
\begin{itemize}
\setlength\itemsep{0em}
\item $\MMp(n,n^\rho,n \mid n^{1-\rho})$,
\item u-APSP,
\item \pmM-APSP for directed graphs without negative cycles,
\item u-APLP for DAGs,
\item \pmM-APLP for DAGs,
\item \RedAPSP{c} for \emph{undirected} graphs for any  $2\le c\le \OO(1)$.
\end{itemize}
\end{theorem}

\IGNORE{

\begin{theorem}
\label{thm:intro:equiv2}
The following problems either all have $O(n^{2+\rho-\eps})$ time algorithms for some $\eps>0$, or none of them do:
\begin{itemize}
\setlength\itemsep{0em}
\item $\MMp(n,n^\rho,n \mid n^{1-\rho})$,
\item u-APSP,
\item M-APSP: APSP in directed graphs with weights in $\{-M,\ldots,M\}$ for $M\leq \tilde{O}(1)$ and no negative cycles,
\item All-Pairs Longest Paths in unweighted DAGs (DAG-u-APLP): given an unweighted directed acyclic graph (DAG) $G$, output for every pair of nodes $s,t$ the weight of the longest path from $s$ to $t$ in $G$,
\item  
DAG-$M$-APLP: All-Pairs Longest Paths in DAGs with with weights in $\{-M,\ldots,M\}$ for $M\leq \tilde{O}(1)$,
\item \CountCap{U}: given an unweighted graph and an $n^{o(1)}$-bit integer $U$, return for every pair of vertices $u,v$, the minimum of $U$ and the number of shortest paths between $u$ and $v$,
\item \RedAPSP{c}: given an unweighted \textbf{undirected} graph in which some edges can be colored red and any $n^{o(1)}$-bit integer $c \ge 2$, return for every pair of vertices, the weight of the shortest path between them that uses at most $c$ red edges. For convenience, we will call all the not red edges blue edges. 
\end{itemize}
\end{theorem}

}

Interestingly, while \RedAPSP{2} in undirected graphs above is equivalent to u-APSP and hence improving upon its $\tilde{O}(n^{2+\rho})$ runtime would be difficult, we show that \RedAPSP{1} in undirected graphs can be solved in $\tilde{O}(n^\omega)$ time via a modification of Seidel's algorithm, and hence there is a seeming jump in complexity in \RedAPSP{c} from $c=1$ to $c=2$.

Besides the above equivalences we provide some interesting reductions from u-APSP to other well-studied matrix product and shortest paths problems.

Lincoln, Polak and Vassilevska W.~\cite{lincoln2020monochromatic} reduce u-APSP to some matrix product problems such as All-Edges Monochromatic Triangle and the $(\min, \max)$-Product studied in \cite{VassilevskaWY06,VassilevskaWY10} and \cite{VassilevskaWY07,duanpettiebott} respectively. Using the equivalence of u-APSP and $\MMp(n,n/\ell,n\mid \ell^{1-p})$, we can reduce u-APSP to another matrix product called Min Witness Equality Product (MinWitnessEq), where we are given $n \times n$ integer matrices $A$ and $B$, and are required to compute $\min\{k \in [n]: A[i,k]=B[k, j]\}$ for every pair of $(i, j)$. This can be viewed as a merge of the Min Witness product \cite{CzumajKL07} \footnote{Recently, there has been renewed interest in studying the Min Witness product, due to a breakthrough \cite{aplca} on the All-Pairs LCA in DAGs problem, which was one of the original motivations for studying Min Witness.} and Equality Product problems \cite{labib2019hamming,vnotes}.

Another natural variant of APSP is the problem of \emph{approximating} shortest path distances. Zwick~\cite{zwickbridge} presented an $\tilde{O}(n^\omega \log M)$ time
algorithm for computing a $(1+\eps)$-multiplicative approximation for all pairwise distances in a directed graph with integer weights in $[M]$, for any constant $\eps>0$.%
\footnote{
Bringmann et al.~\cite{bringmannmaxmin} considered the more unusual setting of very large $\wmax$, where the $\log \wmax$ factor is to be avoided.  
}
This is essentially optimal since any such approximation algorithm can be used to multiply $n\times n$ Boolean matrices. 

An arguably better notion of approximation is to provide an additive approximation, i.e. outputting for every $u,v$ an estimate $D'[u,v]$ for the distance $D[u,v]$ such that $D[u,v]\leq D'[u,v]\leq D[u,v]+E$, where $E$ is an error that can depend on $u$ and $v$. 


At ICALP'08,
Roditty and Shapira \cite{RodittyShapira} studied the following variant: given an unweighted directed graph and a constant $p\in [0,1]$,  compute for all $u,v$ an estimate $D'[u,v]$ with $D[u,v]\leq D'[u,v]\leq D[u,v]+D[u,v]^p$.
They gave an algorithm with running time
$\OO(\max_\ell \min\{n^3/\ell,\MM(n,n/\ell^{1-p},n\mid \ell^{1-p})\})$.
For example, for $p=0$, this matches the time complexity of Zwick's exact
algorithm for u-APSP; for $p=1$, this matches Zwick's $\OO(n^\omega)$-time algorithm with constant multiplicative approximation factor.
For $p=0.5$, with the current rectangular matrix multiplication bounds~\cite{legallurr}, the running time is $O(n^{2.447})$.

We obtain an improved running time:
\begin{theorem}
\label{thm:intro:additive}
For any $p\in [0,1]$, given a directed unweighted graph, one can obtain additive $D[u,v]^p$ approximations to all distances $D[u,v]$ in time $\OO(\max_\ell \MM(n,n/\ell,n\mid \ell^{1-p}))$.
\end{theorem}

The improvement over Roditty and Shapira's running time is substantial.
For example, for all $p\ge 0.415$, the time bound is $O(n^{2.373})$ (the current matrix multiplication running time), whereas their algorithm only achieves $O(n^{2.373})$ for $p=1$.
Our result also answers one of Roditty and Shapira's open question (on whether $\OO(n^\omega)$ time is possible for any $p<1$), if $\omega>2$.

The new algorithm is also \emph{optimal} (ignoring logarithmic factors) in a strong sense, as our reduction technique shows that for all $\ell$, $\MMp(n,n/\ell,n\mid \ell^{1-p})$ can be tightly reduced to the additive $D[u,v]^p$ approximation of APSP\@.  
In particular, 
u-APSP with constant additive error is fine-grained equivalent to exact u-APSP\@.

The \emph{All-Pairs Lightest Shortest Paths (APLSP)} problem studied in \cite{ChanSTOC07,ZwickSTOC99} asks 
to compute for every pair of vertices $s,t$  
the distance from $s$ to $t$ (with respect to the edge weights) and the smallest number of edges over all shortest paths from $s$ and $t$. 
Traditional shortest-path algorithms can be easily modified to find the lightest shortest paths, but not the faster matrix-multiplication-based algorithms.
Our reduction for \RedAPSP{c} can be easily modified to reduce 
$\MMp(n,n^\rho,n \mid n^{1-\rho})$ to $\{0,1\}$-APLSP in undirected graphs, which can be viewed as a conditional lower bound of $n^{2+\rho-o(1)}$ for the latter problem.

\begin{corollary}\label{thm:aplsplb}
If $\{0,1\}$-APLSP in undirected graphs is in $O(n^{2+\rho-\eps})$ time for $\eps>0$, then so is $\MMp(n,n^\rho,n \mid n^{1-\rho})$.
\end{corollary}

The fastest known algorithm to date for $\{0,1\}$-APLSP,
or more generally, \ppM-APLSP for $\ccc=\OO(1)$, for directed or undirected graphs
is by Zwick~\cite{ZwickSTOC99} from STOC'99 and runs in $O(n^{2.724})$ time with the current best bounds for rectangular matrix multiplication (the running time would be $\OO(n^{8/3})$ if $\omega=2$). Chan~\cite{ChanSTOC07} (STOC'07) improved this running time to $\tilde{O}(n^{(3+\omega)/2})\leq O(n^{2.687})$ 
but only if the weights are {\em positive}, i.e., for 
\pnzM-APLSP
(and so his result does not hold for $\{0,1\}$-APLSP).

Both Zwick's and Chan's algorithms solve a more general problem, Lex$_2$-APSP, in which one is given a directed graph where each edge $e$ is given two weights $w_1(e),w_2(e)$ and one wants to find for every pair of vertices $u,v$ the lexicographic minimum over all $u$-$v$ paths $\pi$ of $(\sum_{e\in\pi}w_1(e),\sum_{e\in \pi}w_2(e))$. Then APLSP is Lex$_2$-APSP when all $w_2$ weights are $1$, and the related \emph{All-Pairs Shortest Lightest Paths (APSLP)} problem is when all $w_1$ weights are $1$.

To complement the conditional lower bound for APLSP, and hence Lex$_2$-APSP, we present new algorithms for \ppM-Lex$_2$-APSP for $\ccc=\OO(1)$, both (slightly) improving Chan's running time and also allowing zero weights, something that Chan's algorithm couldn't support.

\begin{theorem}
\label{thm:intro:lexapsp}
\ppM-Lex$_2$-APSP can be solved in $O(n^{2.66})$ time for any $\ccc=\OO(1)$.
\end{theorem}

If $\omega=2$, the above running time would be $\tilde{O}(n^{2.5})$,
improving Zwick's previous $\tilde{O}(n^{8/3})$ bound~\cite{ZwickSTOC99} and matching our conditional lower bound $n^{2+\rho-o(1)}$.  For undirected graphs with {\em positive} weights in $[\ccc]-\{0\}$,
we further improve the running time to $O(n^{2.58})$ under the current matrix multiplication bounds.

We next consider the natural problem, \emph{$\#$APSP}, of counting the number of shortest paths for every pair of vertices in a graph. This problem needs to be solved, for example, when computing the so-called \emph{Betweenness Centrality (BC)} of a vertex. BC is a well-studied measure of vertex importance in social networks. If we let $C[s,t]$ be the number of shortest paths between $s$ and $t$, and $C_v[s,t]$ be the number of shortest paths between $s$ and $t$ that go through $v$, then $\text{BC}(v)=\sum_{s,t\neq v} C_v[s,t]/C[s,t]$ and the BC problem is to compute $\text{BC}(v)$
for a given graph and a given node $v$. 

Prior work \cite{Brandes} showed that $\#$APSP and BC in $m$-edge $n$-node unweighted graphs can be computed in $O(mn)$ time via a modification of Breadth-First Search (BFS).\footnote{Brandes presented further practical improvements as well.} However, all prior algorithms assumed a model of computation where adding two integers of arbitrary size takes constant time. In the more realistic word-RAM model (with $O(\log n)$ bit words), these algorithms would run in $\tilde{\Theta}(mn^2)$ time, as there are explicit examples of graphs with $m$ edges (for any $m$, a function of $n$) for which the shortest paths counts have $\Theta(n)$ bits.\footnote{One example is an $(n/3+2)$-layered graph where the first $n/3$ layers have $2$ vertices each and the last $2$ layers have $n/6$ vertices each. The $i$-th layer and the $(i+1)$-th layer are connected by a complete bipartite graph for each $1 \le i \le n/3$, while the last two layers are connected by $O(m)$ edges. }
In particular, the best running time in terms of $n$ so far has been $\tilde{O}(n^4)$.

We provide the first genuinely $\OO(n^3)$ time algorithm for $\#$APSP, and thus Betweenness Centrality, in directed unweighted graphs.
\begin{theorem}
\label{thm:intro:countapsp}
u-$\#$APSP can be solved in $\OO(n^3)$ time by a combinatorial algorithm.
\end{theorem}
This runtime cannot be improved since there are graphs for which the output size is $\Omega(n^3)$. 

Since the main difficulty of the $\#$APSP problem comes from the counts being very large, it is interesting to consider variants that mitigate this. Let $U\le n^{\OO(1)}$. 
Let \CountMod{U} be the problem of computing all pairwise counts modulo $U$. Let \CountCap{U} be the problem of computing for every pair of nodes $u,v$ the minimum of their count and $U$ (think of $U$ as a ``cap''). Finally, let \CountApx{U} be the problem of computing a $(1+1/U)$-approximation of all pairwise counts (think of keeping the $\log U$ most significant bits of each count).

We obtain the following result for u-\CountCap{U} in directed graphs:

\begin{theorem}\label{thm:apspcapu}
u-\CountCap{U} (in directed graphs) can be solved in $n^{2+\rho}\polylog U\leq n^{2+\rho+o(1)}$ time.

Furthermore, for any $U\ge 2$, if
 u-\CountCap{U} can be solved in $O(n^{2+\rho-\eps})$ time for some $\eps>0$, then so can u-APSP (with randomization).
For any $2\le U\le \OO(1)$, the converse is true as well.
%
\end{theorem}


Thus, we get a conditionally optimal algorithm for u-\CountCap{U}\@.
For $2\le U\le \OO(1)$, the theorem above gives a fine-grained equivalence between u-\CountCap{U} and u-APSP; in particular, for $U=2$, the problem corresponds to testing \emph{uniqueness} for the shortest path of each pair.
(For large $U$, however,
it is not a fine-grained equivalence since the algorithm for u-\CountCap{U} does not go through Min-Plus product, but rather directly uses fast matrix multiplication.) 

Our algorithm from Theorem~\ref{thm:apspcapu} is based on Zwick's algorithm for u-APSP\@. We show that one can also modify Seidel's algorithm for u-APSP in undirected graphs to obtain $\OO(n^\omega)$ time algorithms for u-\CountCap{U} and u-\CountMod{U} in undirected graphs.

\begin{theorem}\label{thm:apspundu}
u-\CountCap{U} and u-\CountMod{U} in undirected graphs can be solved in $\tilde{O}(n^\omega \log U)$ time.
\end{theorem}

We also show that u-\CountApx{U} in undirected graphs can be solved in $O(n^{2.58}\polylog U)$ time, somewhat surprisingly, by a slight modification of our undirected Lex$_2$-APSP algorithm (despite the apparent dissimilarity between the two problems).

\subparagraph*{Paper Organization and Techniques.}
In Section~\ref{sec:reductions}, we show the web of reductions around u-APSP, proving Theorem~\ref{thm:equiv1}, Theorem~\ref{thm:intro:equiv2}, the hardness of additive $D[u,v]^p$ approximate u-APSP and the hardness of u-\CountCap{U} in Theorem~\ref{thm:apspcapu}.  

In Section~\ref{sec:approx}, we give our algorithms for approximating APSP with additive errors, proving Theorem~\ref{thm:intro:additive}. 
In Appendix~\ref{sec:aplsp}, we describe our algorithms for Lex$_2$-APSP\@. In Appendix~\ref{sec:counting}, we consider various versions of $\#$APSP\@. In Appendix~\ref{sec:counting:directedcap}, we prove Theorem~\ref{thm:apspcapu}. In Appendix~\ref{sec:counting:undirected}, we prove Theorem~\ref{thm:apspundu}. In Appendix~\ref{sec:counting:undirectedapprox}, we give an algorithm for u-\CountApx{U}\@. Finally, In Appendix~\ref{sec:counting:exact}, we give an algorithm for u-\CountCap{U} to complete the proof of Theorem~\ref{thm:apspcapu}. 


For approximating APSP with additive error,
we propose an interesting \emph{two-phase}
variant of Zwick's algorithm~\cite{zwickbridge}.  Zwick's algorithm computes distance products of $n\times (n/\ell)$ with $(n/\ell)\times n$ matrices for $\ell$ in a geometric progression.  Our idea is to do less during the first phase, computing products of $(n/\ell)\times (n/\ell)$ with $(n/\ell)\times n$ matrices instead.  We complete the work during a second phase.  The observation is that for the APSP approximation problem, we can afford to perform the distance computation in the first phase \emph{exactly}, but use approximation to speed up the second phase.
The resulting approximation algorithm is even simpler than
Roditty and Shapira's previous (slower) algorithm~\cite{RodittyShapira}.

Our Lex$_2$-APSP algorithm for directed
graphs also uses this two-phase approach, but
in a more sophisticated way to control the size of the numbers in the rectangular matrix products.  A number of interesting new ideas are needed.

To further illustrate the power of this two-phase approach, we also show (in Appendix~\ref{sec:simple:undir}) how the idea can lead to an alternative $\OO(\ccc n^\omega)$ time algorithm for the standard $[\ccc]$-APSP problem for undirected graphs, rederiving Shoshan and Zwick's result~\cite{shoshanzwick} in an arguably simpler way.  This may be of independent interest (as Shoshan and Zwick's algorithm has complicated details).

Our Lex$_2$-APSP algorithm for \emph{undirected} graphs uses small dominating sets for high-degree vertices, an idea of Aingworth et al.~\cite{AingworthCIM99}.  Originally, this idea was for developing combinatorial algorithms for \emph{approximate} shortest paths that avoid matrix multiplication.  Interestingly, we show that this idea can be combined with (rectangular) matrix multiplication to compute \emph{exact} Lex$_2$ shortest paths.

\section{Preliminaries}
The computation model of all algorithms and reductions in this paper is the word-RAM model with $O(\log n)$ bit words.

We let $\MMM(n_1,n_2,n_3)$ denote the best known running time for multiplying an $n_1\times n_2$ by an $n_2\times n_3$ matrix over the integers.
We use $\omega(a, b, c)$ to denote the rectangular matrix multiplication exponent, i.e. the smallest real number $z$ such that $\MMM(n^a,n^b,n^c)\leq O(n^{z+\eps})$ for all $\eps>0$.
In particular, let $\omega=\omega(1,1,1)$. It is known that $\omega\in [2,2.373)$ \cite{almanvw21,vstoc12,legallmult}. The best known bounds for $\omega(a,b,c)$ are in \cite{legallurr}.


Let $\MM(n_1,n_2,n_3\mid \ell_1,\ell_2)$ be the time to compute the Min-Plus product
of an $n_1\times n_2$ matrix $A$ with an $n_2\times n_3$ matrix $B$, where all 
finite entries of $A$ are from $[\ell_1]$ and all finite entries of $B$ are from $[\ell_2]$.  
Let us also denote $\MM(n_1,n_2,n_3\mid \ell) :=
\MM(n_1,n_2,n_3\mid \ell,\ell)$.
It is known~\cite{AlonGalilMargalit}
that  $\MM(n_1,n_2,n_3\mid \ell) \le \OO(\ell\cdot \MMM(n_1,n_2,n_3))$. This algorithm in~\cite{AlonGalilMargalit} first replaces each entry $e$ in both matrices $A, B$ by $(n_2+1)^e$, then uses fast rectangular matrix multiplication to compute the product of the new matrices $A, B$. Since each arithmetic operation takes $\Tilde{O}(\ell)$ time, the running time follows. 

More generally,
let $\MM(n_1,n_2,n_3 \mid m_1,m_2,m_3\mid \ell_1,\ell_2)$ be the time to compute $m_3$ given entries of
the Min-Plus product of an $n_1\times n_2$ matrix $A$ with an $n_2\times n_3$ matrix $B$, where $A$ has at most $m_1$ finite entries, all from $[\ell_1]$,
and $B$ has at most $m_2$ finite entries, all from $[\ell_2]$.



\section{Directed APSP and Rectangular Min-Plus with Bounded Weights}
\label{sec:reductions}
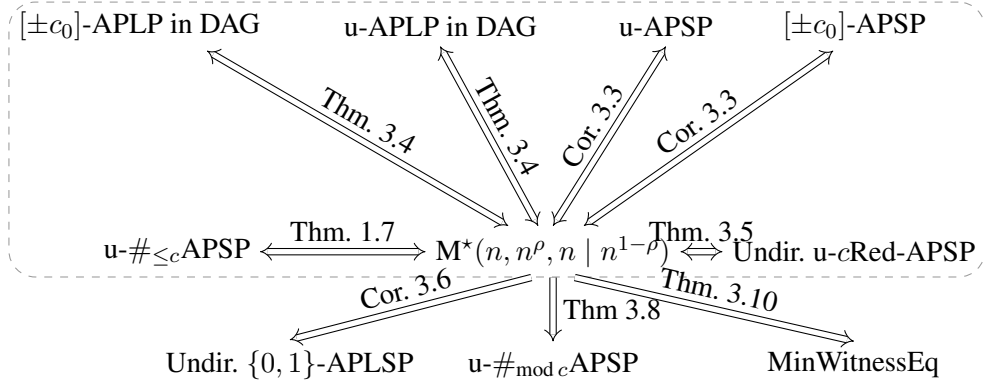
\begin{figure}[ht]
    \centering
    \begin{tikzpicture}
        \node at(1.5, 3)  [anchor=center] (unweightedAPSP){u-APSP};
        
        \node at(4, 3)  [anchor=center] (capsp){\pmM-APSP};

        \node at(-1.5, 3)  [anchor=center] (dagAPLP){u-APLP in DAG};
        
        \node at(-5.5, 3)  [anchor=center] (cdagAPLP){\pmM-APLP in DAG};

        \node at(4, 0)  [anchor=center]
        (2red){ Undir.\ \RedAPSP{c}};
        \node at(0, 0)  [anchor=center] (rectMinPlus){$\MMp(n,n^\rho,n\mid n^{1-\rho})$};
        \node at(-5, 0)  [anchor=center] (capapsp){u-\CountCap{c}};

        \draw[Implies-Implies,double distance=2pt] (unweightedAPSP.south) to[]  node[sloped, anchor=center, above] {Cor.~\ref{cor:unweightedAPSP_equal_MinPlus}} (rectMinPlus.90);
        \draw[Implies-Implies,double distance=2pt] (capsp.230) to[]  node[sloped, anchor=center, above] {Cor.~\ref{cor:unweightedAPSP_equal_MinPlus}} (rectMinPlus.40);
        \draw[Implies-Implies,double distance=2pt] (dagAPLP.south) to[]  node[sloped, anchor=center, above] {Thm.~\ref{thm:APLPdag_equal_Min_Plus}} (rectMinPlus.120);
        \draw[Implies-Implies,double distance=2pt] (cdagAPLP.340) to[]  node[sloped, anchor=center, above] {Thm.~\ref{thm:APLPdag_equal_Min_Plus}} (rectMinPlus.150);
        
        \draw[Implies-Implies,double distance=2pt] (2red.west) to[]  node[sloped, anchor=center, above] {Thm.~\ref{thm:2redAPSP}} (rectMinPlus.east);
        \draw[Implies-Implies,double distance=2pt] (capapsp.east) to[]  node[sloped, anchor=center, above] {Thm.~\ref{thm:apspcapu}} (rectMinPlus.west);

        \draw[opacity=0.4, dashed, rounded corners=10] (current bounding box.north east) -- (current bounding box.north west) -- (current bounding box.south west) -- (current bounding box.south east) -- cycle;
        
        \node[] at(current bounding box.305)  [] (equivclass){\ \ \ \ \ };
        
        \node at(-3.5, -1.5)  [anchor=center] (aplsp){Undir.\ $\{0,1\}$-APLSP};
        
        \node at(0, -1.5)  [anchor=center] (modp){u-\CountMod{c}};
        
        \node at(4, -1.5)  [anchor=center] (minwitnesseq){MinWitnessEq};

        \draw[-Implies,double distance=2pt] (rectMinPlus.230) to[]  node[sloped, anchor=center, above] {Cor.~\ref{cor:aplsp_hard}} (aplsp.north);
        
        \draw[-Implies,double distance=2pt] (rectMinPlus.270) to[]  node[ anchor=center, right] {Thm~\ref{thm:modU_apsp_hard}} (modp.north);
        
        \draw[-Implies,double distance=2pt] (rectMinPlus.310) to[]  node[sloped, anchor=center, above] {Thm.~\ref{thm:min_witness_eq_hard}} (minwitnesseq.north);
        
    \end{tikzpicture}
    \caption{The web of (a subset of) the reductions in this paper. All reductions are $(n^{2+\rho}, n^{2+\rho})$-fine grained reductions, where $\rho$ is such that $\omega(1,\rho,1)=1+2\rho$. The problems in the bounding box are sub $n^{2+\rho}$-equivalent.
    Here, $c_0=\OO(1)$, and $2\le c\le\OO(1)$.}
    \label{fig:APSP_reductions}
\end{figure}


Here we consider the All-Pairs Shortest Paths (APSP) problem in unweighted directed graphs, or more generally in directed graphs with integer weights in $[\pm\ccc]$ with $\ccc=\OO(1)$ and no negative cycles. Zwick \cite{zwickbridge} showed that this problem in $n$-node graphs can be solved in time $\OO(n^{2+\rho})$ time where $\rho$ is such that $\omega(1,\rho,1)=1+2\rho$. For the current best bounds on rectangular matrix multiplication \cite{legallurr}, $\rho$ is roughly $0.529$.

Zwick's algorithm can be viewed as a reduction to rectangular Min-Plus matrix multiplication. The algorithm proceeds in stages, for each $\ell$ from $0$ to $\log_{3/2}(n^{1-\rho})$.

In stage $\ell$, up to logarithmic factors, one needs to compute the Min-Plus product of two matrices $A_\ell$ and $B_\ell$ where $A_\ell$ has dimensions
$n\times n/(3/2)^\ell$ and $B_\ell$ has dimensions $n/(3/2)^\ell \times n$ and both matrices have entries bounded by $(3/2)^\ell$. Intuitively, this computes the pairwise distances that are roughly $(3/2)^\ell$. After stage $\log_{3/2}(n^{1-\rho})$, the algorithm also runs Dijkstra's algorithm\footnote{If there are negative weights, one also needs to run single source shortest paths (SSSP) from a node, as in Johnson's algorithm and then reweight the edges so that they are nonnegative. SSSP can be solved in $\OO((m+n^{1.5})\log^2(c_0)) = \OO(n^2)$ time \cite{van2020bipartite}.} to and from $\OO(n^\rho)$ nodes $S$ sampled randomly and uses $\OO(n^{2+\rho})$ extra time to complete the computation of the distances by considering for every $u,v\in V$, $\min_{s\in S}\{D[u,s]+D[s,v]\}$. This can also be viewed as using the brute-force algorithm to compute the Min-Plus products when $(3/2)^\ell\geq n^{1-\rho}$.

The total running time is within logarithmic factors of
\[n^{2+\rho}+\sum_{\ell=0}^{\log_{3/2}(n^{1-\rho})} \MM(n,n/(3/2)^\ell,n\mid (3/2)^\ell),\]
where $\MM(n_1,n_2,n_3\mid M)$ is the Min-Plus product running time for matrices with entries in $\{0,\ldots,M\}$ and dimensions $n_1\times n_2$ by $n_2\times n_3$.
With the known bounds for Min-Plus product, 
$\MM(n,n^{\tau},n\mid M)\leq \OO(M n^{\omega(1,\tau,1)})$, 
and the running time of Zwick's algorithm becomes $\OO(n^{2+\rho}+n^{1-\rho+\omega(1,\rho,1)})$, which is $\OO(n^{2+\rho})$ when $\omega(1,\rho,1)=1+2\rho$. 

If $\omega=2$, then $\rho$ is $1/2$ and the running time of Zwick's algorithm becomes $\OO(n^{2.5})$. This running time is a seeming barrier for the APSP problem in directed graphs.

In Appendix~\ref{sec:dirapspminplus} we prove the following technical theorem which rephrases Zwick's algorithm \cite{zwickbridge} as a reduction.

\begin{theorem}
\label{thm:APSP_to_Min_Plus}
Let $\rho$ be the solution to $\omega(1,\rho,1)=1+2\rho$.
If the Min-Plus product of an $n\times n^\rho$ matrix by an $n^\rho \times n$ matrix where both matrices have integer entries bounded by $n^{1-\rho}$ (denoted as $\MMp(n,n^\rho,n\mid n^{1-\rho})$) can be computed in $O(n^{2+\rho-\epsilon})$ time for some $\epsilon>0$, then APSP in directed $n$ node graphs with integer edge weights in $[\pm\ccc]$ for $\ccc=\OO(1)$ can be solved in $O(n^{2+\rho-\epsilon'})$ time for $\epsilon'>0$.
\end{theorem}

If $\omega=2$, the above theorem statement becomes:
If the Min-Plus problem of an $n\times \sqrt n$ matrix by a $\sqrt n \times n$ matrix where both matrices have integer entries bounded by $\sqrt n$ can be computed in $O(n^{2.5-\delta})$ time for some $\delta>0$, then APSP in directed $n$ node graphs with integer edge weights in \pmM\ for $\ccc=\OO(1)$ can be solved in $O(n^{2.5-\delta'})$ time for $\delta'>0$.

We will show a reduction in the reverse direction as well, showing that rectangular Min-Plus product with suitably bounded entries can be reduced back to unweighted directed APSP.

\begin{theorem}\label{thm:minplustoapsp}
For any fixed $k\in (0,1)$, $\MMp(n,n^k,n\mid n^{1-k})$  can be reduced in $O(n^2)$ time to APSP in a directed unweighted graph with $O(n)$ vertices.
\end{theorem}

A consequence of  Theorem~\ref{thm:APSP_to_Min_Plus}, and the fact that u-APSP is a special case of  \pmM-APSP for directed graphs without negative cycles, is the following equivalence.

\begin{corollary}
\label{cor:unweightedAPSP_equal_MinPlus}
Let $\rho$ be such that $\omega(1,\rho,1)=1+2\rho$. Then
u-APSP, \pmM-APSP for directed graphs without negative cycles, and $\MMp(n,n^\rho,n\mid n^{1-\rho})$ are sub-$n^{2+\rho}$ fine-grained equivalent for $\ccc=\OO(1)$.
\end{corollary}

In particular, if $\omega=2$, APSP in directed unweighted graphs is sub-$n^{2.5}$ fine-grained equivalent to the Min-Plus problem of an $n\times \sqrt n$ matrix by a $\sqrt n \times n$ matrix where both entries have integer entries bounded by $\sqrt n$.

\begin{proof}[Proof of Theorem~\ref{thm:minplustoapsp}]
Let $A$ be an $n\times n^k$ matrix and let $B$ be an $n^k\times n$ matrix, both with entries in $\{1,\ldots, n^{1-k}\}$.

We will create a directed graph as follows.
Let $I$ be a set of $n$ nodes, which represent the rows of $A$. Let $J$ be a set of $n$ nodes, which represent the columns of $B$.

For every $p\in [n^k]$ corresponding to a column of $A$ (or row of $B$), create a path of $2n^{1-k}+1$ nodes:
\[X(p):=x_{p,n^{1-k}}\rightarrow x_{p,n^{1-k}-1}\rightarrow\ldots\rightarrow x_{p,0} \rightarrow  y_{p,1}\rightarrow y_{p,2}\rightarrow\ldots \rightarrow y_{p,n^{1-k}}.\]

For every $i\in [n]$ and $p\in [n^k]$, consider $t=A[i,p]\in [n^{1-k}]$. Add an edge from $i\in I$ to $x_{p,t}$. Similarly, for every $j\in [n]$ and $p\in [n^k]$, consider $t'=B[p,j]\in [n^{1-k}]$. Add an edge from $y_{p,t'}$ to $j\in J$.

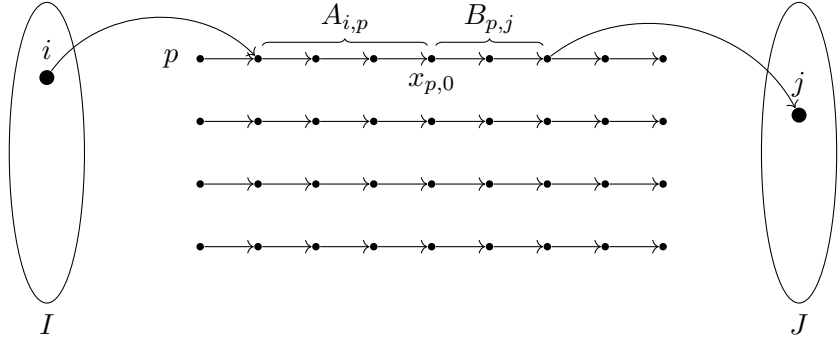
\begin{figure}[ht]
    \centering
    \begin{tikzpicture}
        \node[ellipse, draw,align=left, xshift=-0.5cm, minimum width = 1cm,minimum height = 4cm,label=below:$I$] at (-5, 0) (I) {};
        \node[ellipse, draw,align=left, xshift=-0.5cm, minimum width = 1cm,minimum height = 4cm,label=below:$J$] at (5, 0) (J) {};
        \foreach \i in {1,...,9}
       \foreach \j in {1,...,4}
{
        \pgfmathtruncatemacro{\label}{\i \j};
         \pgfmathtruncatemacro{\ii}{\i -1};
        \pgfmathtruncatemacro{\prevlabel}{\ii \j};
        \node at(\i / 1.3 - 5.5/1.3, \j/1.2-2.5/1.2)  [circle,fill,inner sep=1pt] ({\label}){};
        \ifthenelse{\i>1}{\draw[->,] (\prevlabel) to[]  node[] {} (\label);}{};
}
	\node at(-5.5, 1)  [circle,fill,inner sep=2pt,label=above:$i$] (i){};
	\node at(4.5, 0.5)  [circle,fill,inner sep=2pt,label=above:$j$] (j){};
	\node at(0.5/1.3 -5.5/1.3, 4/1.2-2.5/1.2)  [circle,inner sep=1pt] (p){$p$};
	\node at(5/1.3 -5.5/1.3, 4/1.2-2.5/1.2)  [circle,inner sep=1pt, label=below:{$x_{p,0}$}] (x){};	
	\draw[->, ] (i) to[bend left=50] node[] {} (24);
	\draw[->, ] (74) to[bend left=50] node[] {} (j);
	\draw[decoration={brace,mirror,raise=5pt},decorate]
  (54) -- node[above=6pt] {$A_{i, p}$} (24);
	\draw[decoration={brace,mirror,raise=5pt},decorate]
  (74) -- node[above=6pt] {$B_{p, j}$} (54);
    \end{tikzpicture}
    \caption{Sketch of the construction in proof of Theorem~\ref{thm:minplustoapsp}. 
    For each vertex $i$ and path $p$, we add an edge from $i$ to a vertex on the path $p$ whose distance to the middle point $x_{p, 0}$ on the path is $A_{i, p}$.
    For each path $p$ and vertex $j$, 
     we add an edge from a vertex on the path whose distance from the middle point $x_{p, 0}$ on the path is $B_{p, j}$ to  vertex $j$.}
    \label{fig:APSP_to_MinPlus}
\end{figure}

Now, consider some $i\in [n],p\in [n^k],j\in [n]$ and $A[i,p]+B[p,j]$. If we consider the path consisting of $(i,x_{p,A[i,p]})$, $(y_{p,B[p,j]},j)$ and the subpath of $X(p)$ between $x_{p,A[i,p]}$ and $y_{p,B[p,j]}$, its length is exactly $2+A[i,p]+B[p,j]$. Also, any path from $i$ to $j$ is of this form. 
Thus, the shortest path from $i\in I$ to $j\in J$ in the created graph is exactly of length $2+\min_p \{A[i,p]+B[p,j]\}$, and thus computing APSP in the directed unweighted graph we have created computes the Min-Plus product of $A$ and $B$.

The number of vertices in the graph is $O(n^k \cdot n^{1-k})=O(n)$.
\end{proof}

One consequence of Corollary~\ref{cor:unweightedAPSP_equal_MinPlus} is that u-APSP and computing the predecessor matrix in unweighted directed APSP are also sub-$n^{2+\rho}$ fine-grained equivalent. 
It was known that Zwick's algorithm \cite{zwickbridge} can compute the predecessor matrix for unweighted directed APSP, which can also be viewed as a sub-$n^{2+\rho}$ time reduction from computing the predecessor matrix to $\MMp(n,n^\rho,n\mid n^{1-\rho})$.
Also, if we can compute the  predecessor matrix for the graph constructed in the above proof, we would know which path $X(p)$ the shortest path from $i$ to $j$ uses, which in turn solves $\MMp(n,n^\rho,n\mid n^{1-\rho})$. Thus, computing the predecessor matrix for unweighted directed APSP is sub-$n^{2+\rho}$ fine-grained equivalent to $\MMp(n,n^\rho,n\mid n^{1-\rho})$, and thus also equivalent to u-APSP.

Zwick's algorithm is general enough to apply to some variants of APSP\@. One example is the All-Pairs Longest Paths (APLP) problem in DAGs. To compute APLP in a DAG, we first negate the weight of every edge, then the problem becomes APSP, on which we can directly apply Zwick's algorithm. Therefore, Zwick's algorithm show reductions from u-APLP and \pmM-APLP in DAGs to $\MMp(n,n^\rho,n\mid n^{1-\rho})$.

Perhaps more surprisingly, the other direction of the reduction also holds. Therefore, APLP in DAG and APSP in graphs with weights bounded by $\OO(1)$ are sub-$O(n^{2+\rho})$ equivalent. 



\begin{theorem}
\label{thm:APLPdag_equal_Min_Plus}
Let $\rho$ be such that $\omega(1, \rho, 1) = 1+2\rho$. Then u-APLP in DAGs, \pmM-APLP in DAGs and $\MMp(n,n^\rho,n \mid n^{1-\rho})$  are sub-$n^{2+\rho}$ fine-grained equivalent.
\end{theorem}

The proof of Theorem~\ref{thm:APLPdag_equal_Min_Plus} follows from the same approach and appears in Appendix~\ref{sec:uAPLPminplus}.







All problems shown equivalent to u-APSP above are problems on directed graphs. One natural question is that whether some problems on undirected graphs are also in this equivalence class, or whether we can show some undirected graph problems require $n^{2+\rho-o(1)}$ time if we assume problems in this equivalence class also require $n^{2 + \rho-o(1)}$ time. To answer these questions, we first consider the \RedAPSP{c} problem. 


\begin{theorem}
\label{thm:2redAPSP}
Let $\rho$ be such that $\omega(1, \rho, 1) = 1+2\rho$. \RedAPSP{c} for $2 \le c = \OO(1)$ and  $\MMp(n,n^\rho,n \mid n^{1-\rho})$ are sub-$n^{2+\rho}$ fine-grained equivalent.
\end{theorem}

The proof of Theorem~\ref{thm:2redAPSP} uses a similar graph construction and
 is in Appendix~\ref{sec:thm:2redAPSP}.

By slightly modifying the proof of Theorem~\ref{thm:2redAPSP}, 
we can show conditional hardness for APLSP on undirected graphs where the edge weights are in $\{0, 1\}$. The proof is in Appendix~\ref{sec:thm:2redAPSP}.

\begin{corollary}
\label{cor:aplsp_hard}
Let $\rho$ be such that $\omega(1, \rho, 1) = 1+2\rho$. Suppose $\MMp(n,n^\rho,n \mid n^{1-\rho})$ requires $n^{2+\rho-o(1)}$ time. Then APLSP on undirected graphs where the edge weights can be $\{0, 1\}$ also requires $n^{2+\rho-o(1)}$ time. 
\end{corollary}



Using similar ideas we also show hardness for  \emph{Vertex-Weighted APSP} in undirected graphs, where the vertex weights may be large.
(The current best algorithms for Vertex-Weighted APSP for directed graphs~\cite{ChanSTOC07,YusterSODA09} had running time about
$O(n^{2.85})$; the bound is
$\OO(n^{11/4})$ if $\omega=2$.
No better algorithms were known 
in the undirected graphs---which our conditional lower bound attempts to explain.) The proof is in Appendix~\ref{sec:thm:2redAPSP}.

\begin{corollary}
\label{cor:vertex-weighted-APSP}
Let $\rho$ be such that $\omega(1, \rho, 1) = 1+2\rho$. Suppose $\MMp(n,n^\rho,n \mid n^{1-\rho})$ requires $n^{2+\rho-o(1)}$ time. Then vertex-weighted APSP on undirected graphs where the vertex weights are in $[O(n^{1-\rho})]$ also requires $n^{2+\rho-o(1)}$ time. 
\end{corollary}


The conditional hardness for u-\CountMod{U}
and u-\CountCap{U} 
for any $U \ge 2$
can be proved by combining our graph construction with randomized techniques for a unique variant of Min-Plus product; see Appendix~\ref{sec:thm:modU_apsp_hard}.

\begin{theorem}
\label{thm:modU_apsp_hard}
Let $\rho$ be such that $\omega(1, \rho, 1) = 1+2\rho$. Suppose $\MMp(n,n^\rho,n \mid n^{1-\rho})$ requires $n^{2+\rho-o(1)}$ time (with randomization). Then u-\CountMod{U} 
and u-\CountCap{U} 
for any $U \ge 2$ requires $n^{2+\rho-o(1)}$ time.
\end{theorem}

In Section~\ref{sec:approx}, we will give  an algorithm for approximating APSP with sublinear additive errors. Using the same technique as our reductions from Rectangular Min-Plus product to APSP problems, we can show a conditional lower bound for this problem. 

\begin{theorem}\label{thm:adderr:lb}
Given a directed unweighted graph $G=(V,E)$ with $n$ vertices and a function $f>0$ where $\tfrac{\ell}{f(\ell)}$ is nondecreasing. Suppose we can  approximate the shortest-path distance $D[u,v]$ with
additive error $f(D[u,v])$, for all $u,v\in V$ in $T(n)$ time, then  $\max_{1 \le \ell \le n} \MM\left(n,n/\ell,n\mid \tfrac{\ell}{f(\ell)}\right) \leq O(T(n))$.
\end{theorem}
\begin{proof}
Fix any $1 \le \ell \le n$. First, note that $ \MM\left(n,n/\ell,n\mid \tfrac{\ell}{f(\ell)}\right) = \Theta\left(\MM\left(n,n/\ell,n\mid \tfrac{\ell}{C f(\ell)}\right)\right)$ for any constant $C$. Here, we take $C = 12$ to be a large enough constant. 

Suppose we are given an $n \times n/\ell$ matrix $A$ and an $n/\ell \times n$ matrix $B$, whose entries are positive integers bounded by $\tfrac{\ell}{12 f(\ell)}$, and we want to compute their Min-Plus product $A \star B$.
We use a similar reduction as the one in the proof of Theorem~\ref{thm:minplustoapsp}, but stretching the length of the middle paths. Specifically, we create vertex set $I$ of size $n$, vertex set $J$ of size $n$, and $n/\ell$ paths of the form
$X(p) := x_{p, \frac{\ell}{3f(\ell)}} \leadsto \cdots \leadsto x_{p, 0} \leadsto  y_{p, 0} \leadsto \cdots \leadsto y_{p, \frac{\ell}{3f(\ell)}}.$
From $x_{p, i}$ to $x_{p, i-1}$ and $y_{p, j}$ to $y_{p, j + 1}$, we embed paths of length $6 f(\ell)$; 
from $x_{p, 0}$ to $y_{p,0}$, we embed a path of length $\ell - 2$. Similar to previous reductions, for every $i \in [n] = I$ and $p \in [n/\ell]$, we add an edge from $i$ to $x_{p, A[i, p]}$; for every $j \in [n] = J$ and $p \in [n/\ell]$, we add an edge from $j$ to $y_{p, B[p, j]}$. Then the distance from $i \in I$ to $j \in J$ in this graph equals $\ell + 6f(\ell)  (A\star B)[i, j]$. 

Since $0 \le (A\star B)[i, j] \le \frac{\ell}{6 f(\ell)}$, we must have $\ell \le \ell + 6f(\ell)  (A\star B)[i, j] \le 2\ell$. Since $\frac{\ell}{f(\ell)}$ is nondecreasing, we must have $f(t\ell) \le tf(\ell)$ for any $t \ge 1$, and thus $f(\ell + 6f(\ell)  (A\star B)[i, j]) \le 2f(\ell)$. Therefore, an $f(\ell + 6f(\ell)  (A\star B)[i, j])$-additive approximation of APSP can determine that the distance from $i \in I$ to $j \in J$ is in $\ell + 6f(\ell)  (A\star B)[i, j] \pm 2f(\ell)$, from which we can compute $(A\star B)[i, j]$ easily since $(A\star B)[i, j]$ must be an integer. 
\end{proof}

Finally, we give a reduction from u-APSP to Min Witness Equality, where we are given $n \times n$ integer matrices $A$ and $B$, and are required to compute $\min\{k \in [n]: A[i,k]=B[k, j]\}$ for every pair of $(i, j)$.  Reductions from u-APSP to matrix product problems are considered by Lincoln et al.~\cite{lincoln2020monochromatic}, where they show reductions from u-APSP to the All-Edges Monochromatic Triangle problem and $(\min, \max)$-product problem, but their techniques do not seem to apply to Min Witness Equality.  

The proof of the following theorem is deferred to Appendix~\ref{sec:thm:min_witness_eq_hard}.

\begin{theorem}
\label{thm:min_witness_eq_hard}
Let $\rho$ be such that $\omega(1, \rho, 1) = 1+2\rho$. Suppose $\MMp(n,n^\rho,n \mid n^{1-\rho})$ requires $n^{2+\rho-o(1)}$ time. Then Min Witness Equality requires $n^{2+\rho-o(1)}$ time.
\end{theorem}

\section{Additive Approximation Algorithms for APSP}
\label{sec:approx}

\newcommand{\tD}{\widetilde{D}}


In this section, we
give an algorithm for
approximate APSP with additive errors
in directed unweighted graphs,
to match the lower bound that
we have just proved in
Theorem~\ref{thm:adderr:lb} (ignoring logarithmic
factors).
%
%
Namely, our algorithm achieves running time
$\OO(\max_\ell \MM(n,n/\ell,n\mid \ell^{1-p}))$,
which improves Roditty and Shapira's previous algorithm~\cite{RodittyShapira} with running time $\OO(\max_\ell \min\{n^3/\ell,\MM(n,n/\ell^{1-p},n\mid \ell^{1-p})\})$.  

Let $D[u,v]$ denote the shortest-path distance from $u$ to $v$.

\subparagraph*{Overview.}
The new algorithm is a variation of Zwick's exact u-APSP
algorithm~\cite{zwickbridge}, and is actually simpler than Roditty and
Shapira's algorithm.  The idea is to compute as many as the
shortest-path distances \emph{exactly} as we can in $\OO(n^\omega)$ time in
an initial phase.
In the second phase, we apply rectangular matrix multiplication to
submatrices computed from the first phase, where entries are approximated
by rounding and rescaling.

\subparagraph*{Preliminaries.}
For every $\ell$ that is a power of 3/2,
let $R_\ell\subseteq V$ be a subset of $\OO(n/\ell)$ vertices that hits all shortest paths of length $\ell/2$~\cite{zwickbridge}.
(For example, a random sample works with high probability.)
We may assume that $R_{(3/2)^i} \supseteq R_{(3/2)^{i+1}}$
(because otherwise, we can add $R_{(3/2)^j}$ to $R_{(3/2)^i}$ for all $j>i$
and the size bound would still hold).
For subsets $S_1,S_2\subseteq V$, let $D(S_1,S_2)$ denote the submatrix of $D$
containing the entries for $(u,v)\in S_1\times S_2$.

\subparagraph*{Phase 1.}
We first solve the following subproblem: compute $D[u,v]$ (exactly) for
all $(u,v)\in R_\ell\times V$ with $D[u,v]\le\ell$, and
similarly for
all $(u,v)\in V\times R_\ell$ with $D[u,v]\le\ell$.  

Suppose we have already computed $D[u,v]$ for
all $(u,v)\in R_{2\ell/3}\times V$ with $D[u,v]\le 2\ell/3$,
and similarly for
all $(u,v)\in V\times R_{2\ell/3}$ with $D[u,v]\le 2\ell/3$.

We take the Min-Plus product $D(R_\ell,R_{2\ell/3})\star D(R_{2\ell/3},V)$.
For each $(u,v)\in R_\ell\times V$, if its output entry is smaller than
the current value of $D[u,v]$, we reset $D[u,v]$ to the smaller value.
Similarly, we take the Min-Plus product $D(V,R_{2\ell/3})\star D(R_{2\ell/3},R_\ell)$.
For each $(u,v)\in V\times R_\ell$, if its output entry is smaller than
the current value of $D[u,v]$, we reset $D[u,v]$ to the smaller value.
We reset all entries greater than $\ell$ to $\infty$.

To justify correctness, observe that 
for any shortest path $\pi$ of length between $2\ell/3$ and $\ell$,
the middle $(2\ell/3)/2=\ell/3$ vertices must contain a vertex of $R_{2\ell/3}$, which splits $\pi$ into two subpaths each of length 
at most $\ell/2+\ell/6 \le 2\ell/3$.

We do the above for all $\ell$'s that are powers of $3/2$.
The total cost is
\[
\OO\left(\max_\ell \MM(n/\ell,n/\ell,n\mid \ell)\right)
\le \OO\left(\max_\ell \ell\cdot\MM(n/\ell,n/\ell,n)\right)
\le \OO\left(\max_\ell \ell^2 (n/\ell)^\omega\right) = \OO(n^\omega).
\]

\subparagraph*{Phase 2.}
Next we approximate all shortest-path distances $D[u,v]$ where
$D[u,v]$ is between
$2\ell/3$ and $\ell$, with additive error $O(f(\ell))$ for a given function $f$, as follows:

We compute the Min-Plus product $D(V,R_{2\ell/3})\star D(R_{2\ell/3},V)$, keeping only entries bounded by $O(\ell)$.
As we allow additive error $O(f(\ell))$, we round entries to multiples 
of $f(\ell)$.  This takes $\OO(\MM(n,n/\ell,n\mid \frac{\ell}{f(\ell)}))$ time.

To justify correctness, observe as before that in any shortest path $\pi$ of length between $2\ell/3$ and $\ell$, some vertex in $R_{2\ell/3}$ splits the path into two subpaths of length 
at most $2\ell/3$.

We do the above for all $\ell$'s that are powers of $3/2$.
The total cost is
$
\OO\left(\max_\ell \MM(n,n/\ell,n\mid \tfrac{\ell}{f(\ell)})\right).
$

Standard techniques for generating witnesses for matrix products 
can be applied to
recover the shortest paths (e.g., see \cite{GalilMargalit,zwickbridge}). 

\begin{theorem}
Given a directed unweighted graph $G=(V,E)$ with $n$ vertices and a function $f$ where $\tfrac{\ell}{f(\ell)}$ is nondecreasing,
we can approximate the shortest-path distance $D[u,v]$ with
additive error $O(f(D[u,v]))$ for all $u,v\in V$,
in $\OO\left(\max_\ell \MM(n,n/\ell,n\mid \tfrac{\ell}{f(\ell)})\right)$ time.
\end{theorem}

\subparagraph*{Remark.}
For $f(\ell)=\ell^p$,
we can upper-bound the running time by 
\begin{eqnarray*}
  \OO\left(\max_\ell \MM(n,n/\ell,n\mid \ell^{1-p})\right)
&\le& \OO\left( L^{1-p}\cdot\MMM(n,n/L,n) + n^3/L\right)\\
&\le& \OO\left( L^{1-p} (n^{2+o(1)} + n^\omega/L^{(\omega-2)/(1-\alpha)}) + n^3/L \right)
\end{eqnarray*}
for any choice of $L$,
where $\alpha$ is the rectangular matrix multiplication exponent (satisfying $\omega(1,1,\alpha)=2$).  For example, we can set $L=n^{3-\omega}$, and
for $p > 1- \min\{\frac{\omega-2}{1-\alpha}, \frac{\omega-2}{3-\omega}\}$,  get optimal $\OO(n^\omega)$ running time.  In fact, with the current rectangular
matrix multiplication bounds
we get $\OO(n^{2.373})$ time for $p\ge 0.415\geq (\omega(1,0.373,1)-2\cdot 0.373-1)/(1-0.373)$.
Roditty and Shapira~\cite{RodittyShapira} specifically asked whether there exists $p<1$ for which $\OO(n^\omega)$ time is possible; we have thus answered
their question affirmatively if $\omega>2$.

\subparagraph*{Remark.}
For directed graphs with weights from $[\ccc]$, the running time is
\[\OO\left(\ccc n^\omega + \max_\ell \MM(n,n/\ell,n\mid \ccc \tfrac{\ell}{f(\ell)})\right).\]

\bibliography{ref}

\appendix

\section{Deferred Equivalence and Hardness Proofs}

\subsection{Directed APSP and Rectangular Min-Plus}
\label{sec:dirapspminplus}

Here we prove Theorem \ref{thm:APSP_to_Min_Plus} that reduces u-APSP to rectangular Min-Plus product.

\begin{proof}[Proof of Theorem \ref{thm:APSP_to_Min_Plus}]
Let ALG be an algorithm for $\MMp(n,n^\rho,n\mid n^{1-\rho})$ in $O(n^{2+\rho-\eps})$ time for $\eps>0$.

Let us consider the Min-Plus product running time in each stage of Zwick's algorithm for each $\ell$. Let's pick two parameters $\delta,\delta'\in (0,\rho)$. For any choice of 
 $\ell$ such that $(3/2)^{\ell}\leq n^{1-\rho-\delta}$ for some $\delta>0$, we have that 
$\MM(n,n/(3/2)^\ell,n\mid (3/2)^\ell)$  is bounded from above by $\OO((3/2)^\ell n^{\omega(1,1-\ell/\log_{3/2}(n),1)})$. Since 
 $n^{1-\rho-\delta}/(3/2)^\ell\geq 1$, we can bound $ n^{\omega(1,1-\ell/\log_{3/2}(n),1)})$ from above by $(n^{1-\rho-\delta}/(3/2)^\ell) n^{\omega(1,\rho+\delta,1)}$ by splitting the middle dimension of the matrices to $n^{1-\rho-\delta}/(3/2)^\ell$ pieces and  computing each piece independently. Thus, 
 we get that 
 $\MM(n,n/(3/2)^\ell,n\mid (3/2)^\ell)$ is bounded from above (within polylogarithmic factors) by
 \[(3/2)^\ell n^{\omega(1,1-\ell/\log_{3/2}(n),1)}
 \leq
(3/2)^\ell (n^{1-\rho-\delta}/(3/2)^\ell) n^{\omega(1,\rho+\delta,1)}
=n^{1-\rho-\delta+\omega(1,\rho+\delta,1)}\leq
 n^{2+\rho-\alpha}\] 
for some $\alpha>0$, as $1-r+\omega(1,r,1)$ decreases monotonically as $r$ increases within the interval $[0,1]$.

On the other hand, if we run Dijkstra's algorithm to and from $\OO(n^{\rho-\delta'})$ nodes $S$ and update all pairwise distance estimates for paths going through $S$, this would take $O(n^{2+\rho-\delta'})$ time. This part essentially corresponds to a brute force computation of the  $n\times (n/(3/2)^\ell)\times n$ size Min-Plus products with entries up to $(3/2)^\ell$ for $(3/2)^\ell\geq n^{1-\rho+\delta'}$. 

What remains is to perform $n\times (n/(3/2)^\ell)\times n$ size Min-Plus products with entries up to $(3/2)^\ell$ for $\ell$ such that $(3/2)^\ell\in [n^{1-\rho-\delta},n^{1-\rho+\delta'}]$.

For each $\ell$ such that $n^{1-\rho-\delta}\leq (3/2)^\ell\leq n^{1-\rho}$, the middle dimension of the Min-Plus product is $\leq n^{\rho+\delta}$.
Thus, we can split the middle dimension into $n^\delta$ pieces of size $n^\rho$ and compute $n^\delta$ Min-Plus products of dimension $n\times n^\rho\times n$ with entries up to $(3/2)^\ell\leq n^{1-\rho}$. Computing these $n^\delta$ Min-Plus products using our assumed algorithm ALG takes $O(n^\delta \cdot n^{2+\rho-\eps})$ time. If we set $\delta=c\eps>0$ for some $c \in (0, 1)$, we get (within polylogs) $O(n^{2+\rho-(1-c)\eps})$ time for this step.



Now let's consider $\ell$ such that $n^{1-\rho+\delta'}\geq (3/2)^\ell\geq n^{1-\rho}$; there are $O(\log n)$ such $\ell$.
If we set $N=n^{1+\delta'/(1-\rho)}$, we get that $N^{1-\rho}=n^{1-\rho+\delta'}$ which is the maximum entry in any of the products needed to compute. Also, as $n<N$ and $n^\rho<N^\rho$, we get that
what remains to compute are $O(\log N)$ instances of Min-Plus product of dimension at most $N\times N^\rho\times N$ with entries bounded by $N^{1-\rho}$.


Let's use algorithm ALG to compute these Min-Plus products. The running time of this last step of the reduction becomes within polylogs,
\[n^{(1+\delta'/(1-\rho))(2+\rho-\eps)}=n^{2+\rho-(\eps-\delta'(2+\rho-\eps)/(1-\rho))}.\]
Let us set $\delta'=c\eps (1-\rho)/(2+\rho-\eps)<\rho$.
Then the running time of this step becomes $\OO(n^{2+\rho-(1-c)\eps})$.


We get that APSP in  directed graphs with bounded integer weights can be solved in time (within polylogs) $\OO(n^{2+\rho-\alpha}+n^{2+\rho-(1-c)\eps})$ for any $c \in (0, 1)$.
\end{proof}

\subsection{u-APLP in DAGs and Rectangular Min-Plus}
\label{sec:uAPLPminplus}

\begin{proof}[Proof of Theorem~\ref{thm:APLPdag_equal_Min_Plus}]

A reduction from u-APLP in DAGs to \pmM-APLP in DAGs is trivial, since the former is a special case of the latter. 

To show the reduction from \pmM-APLP in DAGs to $\MMp(n,n^\rho,n \mid n^{1-\rho})$, we first negate the weights of the APLP instance, so the problem becomes APSP. Then we apply  Theorem~\ref{thm:APSP_to_Min_Plus} to complete the reduction. 

It suffices to provide a reduction from $\MMp(n,n^\rho,n \mid n^{1-\rho})$ to u-APLP in DAGs. 

Suppose we are given an $n\times n^\rho$ matrix $A$ and an $n^\rho\times n$ matrix $B$ where both matrices have positive integer entries bounded by $n^{1-\rho}$. First, we create matrices $\bar{A} = n^{1-\rho} + 1 - A$ and $\bar{B} = n^{1-\rho} + 1 - B$. If we compute the Max-Plus product of $\bar{A}$ and $\bar{B}$, we can get the Min-Plus product of $A$ and $B$ by $\min_p \{A[i,p]+B[p, j]\} = 2 + 2n^{1-\rho} - \max_p \{\bar{A}[i,p]+\bar{B}[p, j]\}$. 

Next, we create the same graph from the proof of Theorem~\ref{thm:minplustoapsp} on matrices $\bar{A}$ and $\bar{B}$. The created graph is clearly a DAG. 
We see that now from the APLP of the created graph, we can compute the Max-Plus product of $\bar{A}$ and $\bar{B}$, and thus completing the reduction. 
\end{proof}

\subsection{\RedAPSP{c}, APLSP and Vertex Weighted APSP and Rectangular Min-Plus}
\label{sec:thm:2redAPSP}

\begin{proof}[Proof of Theorem \ref{thm:2redAPSP}]
We first show the reduction from \RedAPSP{c} to $\MMp(n,n^\rho,n \mid n^{1-\rho})$. Since $\MMp(n,n^\rho,n \mid n^{1-\rho})$ is sub-$n^{2+\rho}$ fine-grained equivalent to u-APSP by Corolloary~\ref{cor:unweightedAPSP_equal_MinPlus}, it suffices to show a reduction from \RedAPSP{c} to u-APSP. 

Given an undirected graph $G = (V, R \cup B)$, where $R$ is the set of red edges and $B$ is the set of blue edges. We create a directed graph $G'$ as follows. We copy $V$ to $(c+1)$ parts $V_0, \ldots, V_c$, where a vertex $v \in V$  is copied to $(c+1)$ vertices $v_0 \in V_0, \ldots,  v_c \in V_c$. For each blue edge $e = \{u, v\} \in B$, we add directed edges $(u_i, v_i), (v_i, u_i)$ for each $i \in \{0, \ldots, c\}$. For each red edge $e = \{u, v\} \in R$, we add directed edges  $(u_i, v_{i+1}), (v_i, u_{i+1})$ for $i \in \{0, \ldots , c-1\}$. Now the distance from $u_0$ to $v_i$ in $G'$ is exactly the shortest path distance between $u$ and $v$ that uses exactly $i$ red edges in $G$. Therefore, we can use an algorithm for unweighted directed APSP on $G'$ to compute the pairwise distances in $G'$, then the shortest path length between $u$ and $v$ that uses at most $c$ red edges is $\min_{i=0}^c D_{G'}[u_0, v_i]$.

Now we show the reduction in the other direction. Note that we can reduce \RedAPSP{2} to \RedAPSP{c}  for any $c > 2$ by attaching a length $c-2$ red path $u_{c-2} - u_{c-3} - \cdots u_1 - u$ to any vertex $u$ in the graph. Then the shortest path distance from $u_{c-2}$ to $v$ using at most $c$ red edges in this new graph is exactly the shortest distance from $u$ to $v$ using at most $2$ red edges in the original graph. Thus, it suffices to show the reduction to  \RedAPSP{2}.

This reduction is a modification of the reduction in Theorem~\ref{thm:minplustoapsp}. Let $A$ be an $n \times n^\rho$ matrix and let $B$ be an $n^\rho \times n$ matrix, both with entries in $\{1, \ldots, n^{1-\rho}\}$. 

Let $I$ be a set of $n$ vertices, representing the rows of $A$. Let $J$ be a set of $n$ vertices, representing the columns of $B$. For every $k \in [n^\rho]$ that corresponds to a column of $A$ or row of $B$, we create an undirected path on $2n^{1-\rho}$ vertices, where all edges are blue: 

\[X(k):=x_{k,n^{1-\rho}}- x_{k,n^{1-\rho}-1}-\ldots -  x_{k,0} -  y_{k,1}- y_{k,2}-\ldots - y_{k,n^{1-\rho}}.\]

Finally, for every $i \in [n]$ and $k \in [n^\rho]$, we add a  red edge between $i \in I$ and $x_{k, A[i, k]}$. For every $j \in n$ and $k \in [n^\rho]$, we add a red edge between $j \in J$ and $y_{k, B[k, j]}$.

\begin{figure}[ht]
    \centering
    \begin{tikzpicture}
        \node[ellipse, draw,align=left, xshift=-0.5cm, minimum width = 1cm,minimum height = 4cm,label=below:$I$] at (-5, 0) (I) {};
        \node[ellipse, draw,align=left, xshift=-0.5cm, minimum width = 1cm,minimum height = 4cm,label=below:$J$] at (5, 0) (J) {};
        \foreach \i in {1,...,9}
       \foreach \j in {1,...,4}
{
        \pgfmathtruncatemacro{\label}{\i \j};
         \pgfmathtruncatemacro{\ii}{\i -1};
        \pgfmathtruncatemacro{\prevlabel}{\ii \j};
        \node at(\i / 1.3 - 5.5/1.3, \j/1.2-2.5/1.2)  [circle,fill,inner sep=1pt] ({\label}){};
        \ifthenelse{\i>1}{\draw[-,blue] (\prevlabel) to[]  node[] {} (\label);}{};
}
	\node at(-5.5, 1)  [circle,fill,inner sep=2pt,label=above:$i$] (i){};

	\node at(4.5, 0.5)  [circle,fill,inner sep=2pt,label=above:$j$] (j){};
	\node at(0.5/1.3 -5.5/1.3, 4/1.2-2.5/1.2)  [circle,inner sep=1pt] (p){$p$};
	\node at(5/1.3 -5.5/1.3, 4/1.2-2.5/1.2)  [circle,inner sep=1pt, label=below:{$x_{p,0}$}] (x){};	
	\draw[-, red] (i) to[bend left=50] node[] {} (24);
	\draw[-, red] (74) to[bend left=50] node[] {} (j);
	\draw[decoration={brace,mirror,raise=5pt},decorate]
  (54) -- node[above=6pt] {$A_{i, p}$} (24);
	\draw[decoration={brace,mirror,raise=5pt},decorate]
  (74) -- node[above=6pt] {$B_{p, j}$} (54);
    \end{tikzpicture}
    \caption{Sketch of the reduction in the proof of Theorem~\ref{thm:2redAPSP}. 
    For each vertex $i$ and path $p$, we add a red edge between $i$ and a vertex on the path $p$ whose distance to the middle point $x_{p, 0}$ on the path is $A[i, p]$.
    For each path $p$ and vertex $j$,
     we add a red edge from a vertex on the path whose distance from the middle point $x_{p, 0}$ on the path is $B_{p, j}$ to  vertex $j$.}
    \label{fig:2Red_APSP_to_MinPlus}
\end{figure}
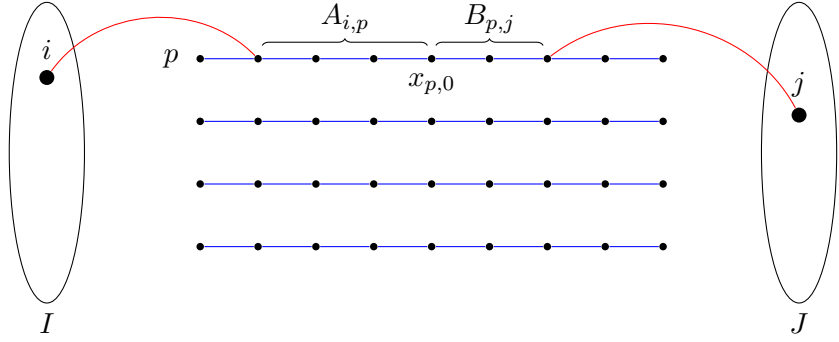

Consider any path from $i \in I$ to $j \in J$ that uses at most $2$ red edges. It must first go to some $X(k)$ using one red edge, going rightwards on $X(k)$ using several blue edges, and finally use another red edge to go to $j$. The length of such a path is $2 + A[i, k] + B[k, j]$. Therefore, the shortest path between $i$ and $j$ that uses at most two red edges have length exactly $2 + \min_k \{A[i, k] + B[k, j]\}$, so calling the \RedAPSP{2} algorithm solves the Min-Plus product instance.
\end{proof}

\begin{proof}[Proof of Corollary \ref{cor:aplsp_hard}]

We consider the reduction from $\MMp(n,n^\rho,n \mid n^{1-\rho})$ to  \RedAPSP{2} in the proof of Theorem~\ref{thm:2redAPSP}. In that reduction, we can replace all red edges with edges of weight $1$, and all blue edges with edges of weight $0$. Then the shortest distance from any vertex $i \in I$ to any vertex $j \in J$ is $2$. Also, the lightest shortest path contains $2 + \min_k \{A[i, k] + B[k, j]\}$ edges. Thus, computing APLSP gives the result of of  an $\MMp(n,n^\rho,n \mid n^{1-\rho})$ instance. 
\end{proof}

\begin{proof}[Proof of Corollary \ref{cor:vertex-weighted-APSP}]
Consider the reduction from $\MMp(n,n^\rho,n \mid n^{1-\rho})$ to  \RedAPSP{2}\@. First, we remove the colors of all edges. For vertices on the paths $X(p)$ for $p \in [n^\rho]$, we set their weights to $1$. For vertices in $I$ and $J$, we set their weight to $2 n^{1-\rho}$. This way, the shortest path from $i \in I$ to $j \in J$ won't visit any other $i'$ or $j'$, so our reduction still follows. 
\end{proof}

\subsection{u-\CountMod{U} and u-\CountCap{U} are Hard from Rectangular Min-Plus}
\label{sec:thm:modU_apsp_hard}

Here we show the conditional hardness for u-\CountMod{U} for any $U \ge 2$.  
The proof for u-\CountCap{U} is similar.

\begin{proof}[Proof of Theorem \ref{thm:modU_apsp_hard}]
First, we define Unique Min-Plus product, where an algorithm is given $A, B$, and is asked to compute $\argmin_{k} \{A[i, k] + B[k, j]\}$. The algorithm only has to be correct on $i, j$ where there exists a unique $k$ that obtains the minimum value; for other $i, j$ pairs, the algorithm is allowed to output any number in $[n^\rho]$. 

We first reduce $\MMp(n,n^\rho,n \mid n^{1-\rho})$ to Unique Min-Plus product of matrices with the same dimensions and entry bounds. 

We perform $\lceil\log(n^\rho)\rceil + 1$ stages, one for each integer $0 \le t \le \lceil \log(n^\rho)\rceil$. During each stage $t$, we will repeat the following for $\Theta(\log n)$ rounds. Let $K$ be the set of column indices of $A$ or the set of row indices of $B$. We independently keep each $k \in K$ with probability $\frac{1}{2^t}$. Thus, we get a submatrix $A'$ of $A$ and a submatrix $B'$ of $B$. We use an algorithm for Unique Min-Plus product to compute $k'[i, j] = \argmin_k \{A'[i, k] + B'[k, j]\}$, and use $A'[i, k'[i, j]] + B'[k'[i, j], j]$ to update our answer for the $(i, j)$-th entry of the Min-Plus product of $A$ and $B$. 

This reduction is correct because when the number of $k$ that achieves minimum value of $A[i, k] + B[k, j]$ is in $[2^t, 2^{t+1}]$, we have a constant probability to keep one unique $k$ if we independently keep every $k \in K$ with probability $\frac{1}{2^t}$. Thus, by repeating the procedure $\Theta(\log n)$ times, we can keep one unique such $k$ in one of the rounds with high probability. 

Then we show a reduction from Unique Min-Plus product to Counting Min-Plus product modulo $U$, where an algorithm needs to compute for every $i, j$, the number of $k$ modulo $U$ such that $A[i,k]+B[k,j] = \min_k \{A[i, k] + B[k, j]\}$. 

For each integer $0 \le t \le \lceil \log(n^\rho)\rceil$, we do the following. For all $k \in [n^\rho]$, if the $t$-th bit in its binary representation is $1$, we duplicate the $k$-th column of $A$ and the $k$-th row of $B$. Then we use an algorithm for Counting Min-Plus product modulo $U$. Suppose there is a unique  $k$ achieving the minimum value of $A[i, k] + B[k, j]$ for $(i, j)$, then the count would be $2 \bmod{U}$ if the $t$-bit in $k$-th binary representation is $1$; otherwise, the count would be $1 \bmod{U}$ (Note that this works even for $U=2$ since the number of witnesses is always at least 1).
Thus, we will be able to completely recover this $k$ for $(i, j)$ if it is unique.

Finally, we reduce Counting Min-Plus product modulo $U$ to u-\CountMod{U}. Note that the Min-Plus product here has two matrices that have dimensions $n \times n^\rho$ and $n^\rho \times n$ respectively and have entries bounded by $n^{1-\rho}$. Thus, we can apply the same reduction from the proof of Theorem~\ref{thm:minplustoapsp}. The number of shortest paths from $i$ to $j$ in that reduction is exactly the number of $k$ that $A[i,k]+B[k,j] = \min_k \{A[i, k] + B[k, j]\}$. 
\end{proof}

\subsection{Min Witness Equality is Hard from Rectangular Min-Plus}
\label{sec:thm:min_witness_eq_hard}

\begin{proof}[Proof of Theorem~\ref{thm:min_witness_eq_hard}]

Suppose we are given integer matrices $A, B$, where $A$ has dimension $n \times n^\rho$, and $B$ has dimension $n^\rho \times n$, and both matrices have entries bounded by $n^{1-\rho}$. We will transform this instance into a Min Witness Equality instance.

Let $A'$ be an $n \times 2n$ matrix. We index the column indices of $A'$ by $\mathcal{V} \times K$, where $\mathcal{V} = [2n^{1-\rho}]$ and $K$ represents the column indices of $A$. Similarly, let $B'$ be a $2n \times n$ matrix, where the row indices of $B$ is also $\mathcal{V} \times K$. We set $A'[i, (v, k)] = A[i, k]$ and $B'[(v, k), j] = v-B[k, j]$. We can easily pad $A'$ and $B'$ to  $2n \times 2n$ square matrices by adding empty rows to $A$ and empty columns to $B$. 

Suppose $A'[i, (v, k)] = B'[(v, k), j]$, then we will have $A[i, k] = v - B[k, j]$, which implies $A[i, k] + B[k, j] = v$. Similarly, if $A[i, k] + B[k, j] = v$, we would have $A'[i, (v, k)] = B'[(v, k), j]$.
Thus, the minimum value of $v$ such that $A'[i, (v, k)] = B'[(v, k), j]$ for some $k$ is the $(i, j)$-th entry of the Min-Plus product of $A$ and $B$. 
Therefore, if we order $\mathcal{V} \times K$ by ordering $\mathcal{V}$ as the primary key, the Min  Witness Equality of $A'$ and $B'$ can easily be used to compute $A \star B$ in $O(n^2)$ time. 

Thus, if we can compute the Min  Witness Equality of $A'$ and $B'$ in $O(n^{2+\rho - \epsilon})$ time for $\epsilon > 0$, we can also compute $A \star B$ in $O(n^{2+\rho - \epsilon})$ time.
\end{proof}





\newcommand{\MMMM}{{\cal M}^{\ast\ast}}

\section{Algorithms for All-Pairs Lightest Shortest Paths}
\label{sec:aplsp}

In this section, we describe algorithms for the following problem,
which includes both All-Pairs Lightest Shortest Paths (APLSP) and Shortest Lightest Paths (APSLP) as special cases:

\begin{problem}
{\bf (Lex$_2$-APSP)} 
We are given a graph $G=(V,E)$ with $n$ vertices,
where each edge $(u,v)\in E$ has a ``primary'' weight
$w_1(u,v)$ and a ``secondary'' weight $w_2(u,v)$.
For every pair of vertices $u,v\in V$, we want to find a path $\pi$ from $u$ to $v$
that minimizes $(\sum_{e\in\pi}w_1(e), \sum_{e\in\pi}w_2(e))$ lexicographically.
\end{problem}

Let $D[u,v]$ be the lexicographical minimum of $(\sum_{e\in\pi}w_1(e), \sum_{e\in\pi}w_2(e))$.
Let $D_1[u,v]$ be the minimum of $\sum_{e\in\pi}w_1(e)$ (the
 shortest-path distance) and let $D_2[u, v]$ be the second coordinate of $D[u,v]$.
APLSP corresponds to the case when all secondary edge weights are 1, whereas
APSLP corresponds to the case when all primary edge weights are 1.




The following lemma, which will be important in the analysis of our Lex$_2$-APSP algorithm, 
bounds the complexity of Min-Plus product of an $n_1\times n_2$ matrix $A$ and an $n_2\times n_3$ matrix $B$ in the case when the finite entries of $A$ come from a small range $[\ell_1]$ (but the finite entries of $B$ may come from a large range $[\ell_2]$).  The bound can be made sensitive to the number $m_2$ of finite entries of $B$ and the number $m_3$ of output entries we want.
The lemma is a variant of \cite[Theorem~3.5]{ChanSTOC07}
(the basic approach originates from Matou\v sek's dominance algorithm~\cite{MatIPL}, but this variant requires some extra ideas). It also generalizes and improves (using rectangular matrix multiplication) Theorem 1.2 in \cite{moreapsp20}. 

\newcommand{\CC}{\widehat{C}}

\begin{lemma}\label{lem:prod:sparse}
$\MM(n_1,n_2,n_3\mid \ell_1,\ell_2)\: =\:
\displaystyle\OO\left(\min_t (\MM(n_1, n_2, n_2n_3/t\mid \ell_1) + tn_1n_3)\right).$

More generally,
$\MM(n_1,n_2,n_3 \mid m_1,m_2,m_3\mid \ell_1,\ell_2) \: =\: \displaystyle\OO\left( \min_t (\MM(n_1, n_2, m_2/t\mid \ell_1) + t m_3)\right).$
\end{lemma}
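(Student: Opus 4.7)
The plan is to reduce the Min-Plus product $A\star B$ with ranges $(\ell_1,\ell_2)$ to a Min-Plus product in which the second matrix also has range $[\ell_1]$, inflating $B$'s column dimension but leaving the inner dimension untouched. The approach follows the spirit of Matou\v sek's dominance trick, augmented with a normalize-and-cap step that uses the small range of $A$.

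For each column $j$ of $B$, I would sort the finite entries $B[k,j]$ in nondecreasing order and partition the sorted list into consecutive groups $K_{j,1},K_{j,2},\dots$ of $t$ entries each; globally this produces $\Theta(m_2/t)$ groups. Set $\beta_{j,g}:=\min_{k\in K_{j,g}}B[k,j]$. I then form an $n_2\times(m_2/t)$ auxiliary matrix $B^{\ast}$, whose column indexed by $(j,g)$ is
\[
B^{\ast}[k,(j,g)]\;:=\;\min\bigl(B[k,j]-\beta_{j,g},\,\ell_1\bigr)\quad\text{for } k\in K_{j,g},
\]
and $+\infty$ elsewhere. The cap at $\ell_1$ is safe: the row $k'\in K_{j,g}$ attaining $\beta_{j,g}$ already contributes $A[i,k']\le\ell_1$ to the in-group minimum, so any $k$ with $B[k,j]-\beta_{j,g}>\ell_1$ contributes $\ge\ell_1$ after the cap and cannot be the in-group minimizer. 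Hence $B^{\ast}$ has finite entries in $[\ell_1]$, and invoking the small-range subroutine produces $C^{\ast}:=A\star B^{\ast}$ in $\MM(n_1,n_2,m_2/t\mid\ell_1)$ time, with the identity $C^{\ast}[i,(j,g)]+\beta_{j,g}=\min_{k\in K_{j,g}}(A[i,k]+B[k,j])$.

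Reconstruction then reads off $C[i,j]=\min_g\bigl(C^{\ast}[i,(j,g)]+\beta_{j,g}\bigr)$ for each requested output entry, and the first form of the lemma follows by taking $m_2=n_2n_3$ and $m_3=n_1n_3$. I expect the main obstacle to be justifying the $O(tm_3)$ reconstruction bound: a naive scan over all $\Theta(m_{2,j}/t)$ groups per output column gives $\Theta(n_1m_2/t)$, which matches the claim only when the two terms in the minimum are already balanced ($t\approx\sqrt{n_2}$). Obtaining the clean statement for every $t$ requires exploiting that the $\beta_{j,g}$'s are monotone in $g$ and $C^{\ast}$ is $\ell_1$-bounded, so that only the first handful of groups per column can attain the minimum; this early-termination / sliding-window accounting is the piece that the authors flag as needing ``some extra ideas'' beyond the base dominance template, and it is also what lets the rectangular-MM flavour of $\MM(n_1,n_2,\cdot\mid\ell_1)$ give the improvement over \cite{moreapsp20}.
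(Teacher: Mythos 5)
Your reduction to a single small-range product breaks at the capping step. The claim that the cap at $\ell_1$ is ``safe'' silently assumes that the row $k'$ attaining $\beta_{j,g}$ satisfies $A[i,k']<\infty$; but $A$ has infinite entries, and for rows $i$ with $A[i,k']=\infty$ the argument collapses. Concretely, take a group with two rows, $B[k_1,j]=\beta_{j,g}$ and $B[k_2,j]=\beta_{j,g}+10\ell_1$, and a row $i$ with $A[i,k_1]=\infty$ and $A[i,k_2]=0$: your product returns $C^{\ast}[i,(j,g)]+\beta_{j,g}=\beta_{j,g}+\ell_1$, while $\min_{k\in K_{j,g}}(A[i,k]+B[k,j])=\beta_{j,g}+10\ell_1$, so the reconstructed $C[i,j]$ can strictly underestimate the true Min-Plus value, and nothing in $C^{\ast}$ lets you detect or repair this. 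The underlying obstruction is that, restricted to the rows of a group that are usable for a given $i$ (those with $A[i,k]<\infty$), the $B$-values can still span an arbitrarily large range, so no per-group shift-and-cap can make the second matrix $\ell_1$-bounded while preserving the product.

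The second gap, which you flag yourself, is also not repairable via monotonicity of the $\beta_{j,g}$'s: if all finite entries of a column are equal (or lie in a window of width $O(\ell_1)$), every group of that column is a potential minimizer, and which one wins depends on where $A[i,\cdot]$ is finite rather than on the $\beta$'s, so there is no early termination after the first few groups and you are left with the naive $\Theta(m_2/t)$ scan per row. The paper resolves both issues by using two auxiliary products in place of your one. First, a Boolean product $C[i,j']$ records whether group $j'$ contains some $k$ with $A[i,k]<\infty$; its cost $\MMM(n_1,n_2,m_2/t)$ is absorbed in the stated bound. Second, the small-range Min-Plus product $\CC[i,j']$ is taken not over the group's own entries but over a value window: if $x$ is the maximum element of group $j'$ within column $j$, its column ranges over all $k$ with $B[k,j]\in[x,x+\ell_1]$, a width-$\ell_1$ range, so shifting suffices and no capping is needed. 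For each of the $m_3$ requested outputs one finds the smallest-rank group $j'$ of column $j$ with $C[i,j']$ true; since that group contains some $k$ with $A[i,k]\le\ell_1$ and $B[k,j]\le x$, the answer is at most $x+\ell_1$, hence its minimizer lies in group $j'$, or among the at most $t$ ungrouped leftover entries of the column, or has $B[k,j]\in[x,x+\ell_1]$ --- handled by an $O(t)$ scan plus a single lookup of $\CC[i,j']$, giving $O(tm_3)$ for every choice of $t$, not just the balanced one. (A minor further point: leaving up to $t$ leftover entries per column ungrouped is what keeps the number of groups at $m_2/t$; with one partial group per column, as in your construction, the group count can exceed $m_2/t$ when many columns have few finite entries.)
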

\begin{proof}

Divide each column of $B$ into groups of $t$ entries by rank: the first group contains the $t$ smallest elements, the second group contains the next
$t$ smallest, etc.\ (ties in ranks can
be broken arbitrarily).
Each column may have at most $t$ leftover entries.  The total number of groups is at most $m_2/t$.

For each $i\in [n_1]$ and $j'\in [m_2/t]$,
let $C[i,j']$ be true iff there exists $k\in [n_2]$ such that $A[i,k]<\infty$ and group $j'$ contains an element with row index $k$.  Computing $C$ reduces to taking a Boolean matrix product and has cost
$O(\MMM(n_1,n_2,m_2/t))$.

For each $i\in [n_1]$ and $j'\in [m_2/t]$, suppose that group $j'$ is part of column $j$ and the maximum element in group $j'$ is $x$;
let $\CC[i,j'] = \min_{k: B[k,j]\in [x,x+\ell_1]} (A[i,k] + B[k,j])$.
Since entries in $A$ are from the range $[\ell_1] \cup \{\infty\}$, and we only keep a size $\ell_1+1$ range of values for matrix $B$, 
computing $\CC$ reduces to taking a Min-Plus product with entries in $[\ell_1]$ (after shifting) and has cost
$O(\MM(n_1,n_2,m_2/t\mid \ell_1))$.

To compute the output entry at each of the $m_3$ positions $(i,j)$, we find the group $j'$ in column $j$ with the smallest rank such that $C[i,j']$ is true.
Let $x$ be the maximum element in group $j'$.  
The answer $\min_k(A[i,k]+B[k,j])$ is at most $x+\ell_1$.
Thus, the answer is defined by an index $k$ that (i)~corresponds to an element in group $j'$, or
(ii)~corresponds to a leftover element in column $j$, or
(iii)~has $B[k,j]\in [x,x+\ell_1]$.
Cases (i) and (ii) can be handled by linear search in $O(t)$ time; case (iii) is handled by looking up $\CC[i,j']$.
The total time to compute $m_3$ output entries is $O(tm_3)$.
%
%
%
\end{proof}

\subsection{\ppM-Lex$_2$-APSP}
\label{sec:aplsp:dir}

Let $\ccc=\OO(1)$.
For directed graphs, Zwick~\cite{ZwickSTOC99} presented a variant
of his u-APSP algorithm that solves \ppM-Lex$_2$-APSP (and thus \ppM-APLSP and \ppM-ALPSP) in 
time $\OO(\max_\ell \MM(n,n/\ell,n\mid \ell^2)) \le \OO(\min_L (L^2 \MMM(n,n/L,n) + n^3/L))$.  This is $O(n^{2.724})$ by the current bounds
on rectangular matrix multiplication~\cite{legallurr} (and is
$\OO(n^{8/3})$ if $\omega=2$).

Chan~\cite{ChanSTOC07} gave a faster algorithm for \pnzM-Lex$_2$-APSP
(and in fact a special case of Vertex-Weighted APSP that includes
\pnzM-Lex$_k$-APSP for an arbitrary constant~$k$)
in time $\OO(n^{(3+\omega)/2})$, which is $O(n^{2.687})$ by the
current matrix multiplication exponent
(and is $\OO(n^{2.5})$ if $\omega=2$). Zwick's algorithm works even when zero primary weights are allowed, but Chan's algorithm does not (part of the difficulty is that 
the secondary distance of a path may be much
larger than the primary distance).
A more general version of Chan's algorithm~\cite{ChanSTOC07} can handle zero primary weights (and \ppM-Lex$_k$-APSP for constant $k$) but has a worse time bound of 
$\OO(n^{(9+\omega)/4})$, which can be slightly reduced using rectangular matrix mutiplication~\cite{YusterSODA09}.

We describe an $O(n^{2.6581})$-time algorithm to solve \ppM-Lex$_2$-APSP
for directed graphs, which can handle zero weights and is faster than Zwick's $O(n^{2.724})$-time algorithm; it is also slightly faster than Chan's algorithm.
The algorithm uses rectangular matrix multiplication (without which the running time would be $\OO(n^{(\omega+3)/2})$).
It should be noted that Chan's previous algorithm can't be easily sped up using rectangular matrix multiplication, besides being inapplicable when there are zero primary weights.

\subparagraph*{Overview.}
The new algorithm can be viewed as an interesting variant of Zwick's u-APSP algorithm~\cite{zwickbridge}.
Zwick's algorithm uses rectangular Min-Plus products of dimensions around $n\times n/\ell$
and $n/\ell\times n$, in geometrically increasing parameter $\ell$.  Our algorithm proceeds 
in two phases.  In both phases, we use the rectangular products of dimensions
around $n/\ell \times n/\ell$ and $n/\ell\times n$.  In the first phase, we consider $\ell$ in increasing order; in the second, we consider $\ell$ in decreasing order.  
In these Min-Plus products, entries of the first matrix in each product come from a small range; this enables us to use
Lemma~\ref{lem:prod:sparse}.

\subparagraph*{Preliminaries.}  
Let $L$ be a parameter to be set later.
Let $\lambda[u,v]$ denote the length of a lexicographical shortest path
between $u$ and $v$.  In this section, the \emph{length} of a path refers
to the number of edges in the path.

For every $\ell$ that is a power of 3/2,
as in Section~\ref{sec:approx},
let $R_\ell\subseteq V$ be a subset of $\OO(n/\ell)$ vertices that hits all shortest paths of 
length $\ell/2$~\cite{ZwickSTOC99,zwickbridge}.
We may assume that $R_{(3/2)^i} \supseteq R_{(3/2)^{i+1}}$
(as before).
Set $R_1=V$.

For $S_1,S_2\subseteq V$, let $D(S_1,S_2)$ denote the submatrix of $D$
containing the entries for $(u,v)\in S_1\times S_2$.

\subparagraph*{Phase 1.}
We first solve the following subproblem for a given $\ell\le L$: compute $D[u,v]$ for
all $(u,v)\in R_\ell\times V$ with $\lambda[u,v]\le\ell$, and
similarly for
all $(u,v)\in V\times R_\ell$ with $\lambda[u,v]\le\ell$.  
(We don't know $\lambda[u,v]$ in advance.  More precisely, if $\lambda[u,v]\le \ell$, the 
computed value should be correct;
otherwise, the computed value is only guaranteed to be an upper bound.)

Suppose we have already computed $D[u,v]$ for
all $(u,v)\in R_{2\ell/3}\times V$ with $\lambda[u,v]\le 2\ell/3$,
and similarly for
all $(u,v)\in V\times R_{2\ell/3}$ with $\lambda[u,v]\le 2\ell/3$.

We take the Min-Plus product $D(R_\ell,R_{2\ell/3})\star D(R_{2\ell/3},V)$
(where elements are compared lexicographically).
For each $(u,v)\in R_\ell\times V$, if its output entry is smaller than
the current value of $D[u,v]$, we reset $D[u,v]$ to the smaller value.
Similarly, we take the Min-Plus product $D(V,R_{2\ell/3})\star D(R_{2\ell/3},R_\ell)$.
For each $(u,v)\in V\times R_\ell$, if its output entry is smaller than
the current value of $D[u,v]$, we reset $D[u,v]$ to the smaller value.
We reset all entries greater than $\ccc\ell$ to $\infty$.

To justify correctness, observe that 
for any shortest path $\pi$ of length between $2\ell/3$ and $\ell$,
the middle $(2\ell/3)/2=\ell/3$ vertices must contain a vertex of $R_{2\ell/3}$, which splits 
$\pi$ into two subpaths each of length 
at most $\ell/2+\ell/6 \le 2\ell/3$.

To take the product,
we map each entry $D[u,v]$ of $D(R_{2\ell/3},V)$ to a number $D_1[u,v]\cdot \ccc\ell + D_2[u,v]\in [\OO(\ell^2)]$. 
It is more efficient to break the product into $\ell$ separate products, by putting
entries of $D(R_\ell,R_{2\ell/3})$ with a common $D_1$ value into one matrix.
Then after shifting, the finite entries of each such matrix are in $[\OO(\ell)]$.
(The entries of $D(R_{2\ell/3},V)$ are still in $[\OO(\ell^2)]$.)
Hence, the computation takes time $\OO(\ell\cdot \MM(n/\ell,n/\ell,n\mid 
\ell,\ell^2))$.

We do the above for all $\ell\le L$ that are powers of $3/2$ (in increasing order).

\subparagraph*{Phase 2.}
Next we solve the following subproblem for a given $\ell\le L$: compute $D[u,v]$ for
all $(u,v)\in R_{2\ell/3}\times V$ with $\lambda[u,v]\le L$.  

Suppose we have already computed $D[u,v]$ for
all $(u,v)\in R_{\ell}\times V$ with $\lambda[u,v] \le L$.

We take the Min-Plus product $D(R_{2\ell/3},R_{\ell})\star D(R_{\ell},V)$, keeping only 
entries bounded by $\OO(\ell)$ in the first matrix and $\OO(L)$ in the second matrix.
For each $(u,v)\in V\times R_\ell$, if its output entry is smaller than
the current value of $D[u,v]$, we reset $D[u,v]$ to the smaller value.

To justify correctness, recall that 
for $(u,v)\in R_{2\ell/3}\times V$,
if $\lambda[u,v]<2\ell/3$, then $D[u,v]$ is already computed in Phase 1.
On the other hand, in any shortest path $\pi$ of length between $2\ell/3$ and~$L$, the first $\ell/2$ vertices of the path must contain a vertex of $R_\ell$.

To take the product,
we map each entry $D[u,v]$ of $D(R_{2\ell/3},V)$ to a number $D_1[u,v]\cdot \ccc L + D_2[u,v]\in [\OO(\ell L)]$. 
As before, it is better to perform $\ell$ separate products, by putting
entries of $D(R_{2\ell/3},R_{\ell})$ with a common $D_1$ value into one matrix.
Then after shifting, the finite entries of each such matrix are in $[\OO(\ell)]$.
(The entries of $D(R_{2\ell/3},V)$ are still in $[\OO(\ell L)]$.)
Hence, the computation takes time $\OO(\ell\cdot \MM(n/\ell,n/\ell,n\mid 
\ell,\ell L))$.

We do the above for all $\ell\le L$ that are powers of $3/2$ (in decreasing order).

\subparagraph*{Last step.}
By the end of Phase 2 (when $\ell$ reaches 1), we have computed $D[u,v]$ for all
$(u,v)$ with $\lambda[u,v]\le L$.
To finish, we compute $D[u,v]$ for all $(u,v)$ with $\lambda[u,v]>L$, as follows:

We run Dijkstra's algorithm $O(|R_L|)$ times to compute $D[u,v]$ for all $(u,v)\in R_L\times 
V$ and for all $(u,v)\in V\times R_L$.  This takes
$O(|R_L|n^2)=\OO(n^3/L)$ time.
We then compute $D(V,R_L)\star D(R_L,V)$ by brute force in
$O(|R_L|n^2)=\OO(n^3/L)$ time.

Correctness follows since every shortest path of length more than $L$
must pass through a vertex in $R_L$.

As before, standard techniques for generating witnesses for matrix products 
can be applied to
recover the shortest paths \cite{GalilMargalit,zwickbridge}. 

\subparagraph*{Total time.}
The cost of Phase~2 dominates the cost of Phase~1.  By Lemma~\ref{lem:prod:sparse},
the total cost is 
\begin{eqnarray*}
\lefteqn{\OO\left(\max_{\ell\le L}  
\ell\cdot \MM(n/\ell,n/\ell,n\mid \ell,\ell L)
 \,+\, n^3/L\right)}\\
 &\le& 
\OO\left(\max_{\ell\le L} \ell\cdot \min_t \left(\MM(n/\ell,n/\ell, n^2/(\ell t)\mid \ell) + tn^2/\ell\right) \,+\, n^3/L\right).
\end{eqnarray*}
We set $t=n/L$ and obtain
\[\OO(\max_{\ell\le L}\ell^2\cdot \MMM(n/\ell,n/\ell, Ln/\ell) \,+\, n^3/L).\]
Intuitively, the maximum occurs when $\ell=1$, and so 
we should choose $L$ to minimize $\OO(\MMM(n,n,Ln) + n^3/L)$.
With the current bounds on rectangular matrix multiplication \cite{legallurr}, we choose $L=n^{0.342}$ and get running time $O(n^{2.6581})$.
(Formally, we can verify this time bound using
the convexity of the function $2x + \omega(1-x,1-x,1.342-x)$.)


\begin{theorem}
\ppM-Lex$_2$-APSP (and thus \ppM-APLSP and \ppM-APSLP) 
can be solved in $O(n^{2.6581})$ time for any  $\ccc=\OO(1)$.
%
\end{theorem}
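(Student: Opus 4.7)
The plan is to extend Zwick's bridging-set framework for u-APSP to the lexicographic setting, splitting the work into two Min-Plus-product phases followed by a Dijkstra-based handler for long paths. Throughout, I would use the hitting sets $R_\ell\subseteq V$ of size $\OO(n/\ell)$, indexed by powers of $3/2$, that hit every shortest path of length $\ell/2$, together with a threshold parameter $L$ to be optimized at the end. The key design choice, diverging from Zwick, is to use the more balanced $(n/\ell)\times(n/\ell)$ by $(n/\ell)\times n$ product shape, which benefits more from rectangular matrix multiplication and, crucially, has a \emph{small-range} first matrix so that Lemma~\ref{lem:prod:sparse} applies.

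Phase~1 will inductively compute $D[u,v]$ for $(u,v)\in R_\ell\times V$ (and symmetrically $V\times R_\ell$) whenever $\lambda[u,v]\le\ell$, processing $\ell$ in increasing order. The inductive step takes $D(R_\ell,R_{2\ell/3})\star D(R_{2\ell/3},V)$: any shortest path of length in $(2\ell/3,\ell]$ has a midpoint vertex in $R_{2\ell/3}$ splitting it into two pieces of length $\le 2\ell/3$. To perform the lexicographic comparison via a single Min-Plus product, I encode each entry as $D_1\cdot\ccc\ell+D_2\in[\OO(\ell^2)]$, and additionally split $D(R_\ell,R_{2\ell/3})$ into $O(\ell)$ matrices according to common $D_1$ value, so that the first factor in each subproduct has range only $[\OO(\ell)]$ after shifting. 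Phase~2 mirrors this but widens the length budget from $\ell$ up to $L$ while narrowing the source set: for $(u,v)\in R_{2\ell/3}\times V$ with $\lambda[u,v]\in(2\ell/3,L]$, the first $\ell/2$ vertices hit $R_\ell$, so $D(R_{2\ell/3},R_\ell)\star D(R_\ell,V)$ suffices, again split by $D_1$ and encoded with range $[\OO(\ell L)]$ on the second factor. After Phase~2 finishes (at $\ell=1$), we have $D[u,v]$ for all pairs with $\lambda[u,v]\le L$.

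The last step handles $\lambda[u,v]>L$: every such path passes through some vertex of $R_L$, so running Dijkstra from each vertex of $R_L$ and then taking $D(V,R_L)\star D(R_L,V)$ by brute force costs $\OO(n^3/L)$. Witnesses for path reconstruction are recovered by the standard matrix-witness machinery. A conceptual advantage of this structure over Chan's $\OO(n^{(3+\omega)/2})$ algorithm is that we never need the secondary distance to be comparable to the primary distance: Lemma~\ref{lem:prod:sparse} controls the product purely through the range $\ell$ of the \emph{first} matrix, so zero primary weights cause no issue.

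The main obstacle is balancing the cost. Phase~2 dominates, giving $\OO\bigl(\ell\cdot\MM(n/\ell,n/\ell,n\mid\ell,\ell L)\bigr)$ per scale, maximized over $\ell\le L$. Plugging in Lemma~\ref{lem:prod:sparse} with splitting parameter $t=n/L$ collapses this to $\OO\bigl(\ell^2\cdot\MMM(n/\ell,n/\ell,Ln/\ell)\bigr)$, whose worst case is $\ell=1$, producing an overall budget $\OO(\MMM(n,n,Ln)+n^3/L)$. To finish, I would verify via the convexity of $x\mapsto 2x+\omega(1-x,1-x,1.342-x)$ that the current rectangular matrix multiplication bounds of~\cite{legallurr} are optimized by $L=n^{0.342}$, yielding the stated $O(n^{2.6581})$ time bound.
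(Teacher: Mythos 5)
Your proposal matches the paper's proof essentially step for step: the same hitting sets $R_\ell$, the same two-phase structure with products $D(R_\ell,R_{2\ell/3})\star D(R_{2\ell/3},V)$ and $D(R_{2\ell/3},R_\ell)\star D(R_\ell,V)$, the same lexicographic encoding and splitting into $\OO(\ell)$ subproducts by common $D_1$ value, the same Dijkstra-plus-brute-force handling of long paths, and the same application of Lemma~\ref{lem:prod:sparse} with $t=n/L$ leading to $L=n^{0.342}$ and $O(n^{2.6581})$. This is the paper's argument, correctly reproduced.
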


\subparagraph*{Remarks.}
Without rectangular matrix multiplication, the above still
gives a time bound of $\OO(Ln^\omega + n^3/L)$, yielding $\OO(n^{(3+\omega)/2})$.

The same algorithm works even with negative weights (i.e., for \pmM-Lex$_2$-APSP),
like Zwick's previous algorithm~\cite{ZwickSTOC99}, assuming no negative cycles.

In Appendix~\ref{sec:aplsp:dir2}, we describe an alternative algorithm that has the same running time, though it does not allow zero primary edge weights (or negative weights).

\subsection{Undirected \pnzM-Lex$_2$-APSP}
\label{sec:aplsp:undir}

A natural question is whether APLSP or APSLP is easier for undirected graphs.
We now describe a faster $O(n^{2.58})$-time algorithm for \pnzM-Lex$_2$-APSP for undirected graphs.  Zero primary weights are not allowed, but zero secondary weights are.  (In particular, the algorithm can solve \ppM-APSLP, when all primary weights are 1.)

\subparagraph*{Overview.}
We follow an idea of Aingworth et al.~\cite{AingworthCIM99}, to divide into
two cases: when the source vertex has high degree or low degree.
For high-degree vertices, there exists a small dominating set, and so
these vertices can be covered by a small number of ``clusters''; sources
in the same cluster are close together, and so distances from one fixed
source give us good approximation to distances from other sources 
in the same cluster, by the triangle inequality (since the graph is undirected).  On the other hand, for low-degree vertices, the relevant
subgraph is sparse, which enables faster algorithms.  Originally, Aingworth et al.'s approach was intended for the design of approximation algorithms (with $O(1)$ additive error for unweighted graphs).
We will adapt it to find \emph{exact} shortest paths.
(Chan~\cite{ChanSODA06} previously had also applied Aingworth et al.'s approach to 
exact APSP, but the goal there was in logarithmic-factor speedup, which
was quite different.)  In order to handle the high-degree case for Lex$_2$-APSP, we need further ideas to use approximate primary shortest-path distances to compute exact lexicographical shortest-path distances; in particular, 
we will need 
Min-Plus products on secondary distances (as revealed in the proof
of Lemma~\ref{lem:multisource} below).
The combination of Aingworth et al.'s approach with matrix multiplication appears new, and interesting in our opinion.

\subparagraph*{Preliminaries.}
We first compute $D_1[u,v]$ for all $(u,v)$ by running a known \ppM-APSP algorithm on the primary distances in $O(n^\omega)$ time~\cite{AlonGalilMargalit,Seidel}.

Assume that we have already
computed $D[u,v]$ for all $(u,v)$ with $D_1[u,v]\le 2\ell/3$ for a given $\ell$.
We want to compute $D[u,v]$ for all $(u,v)$ with $D_1[u,v]\le \ell$.

\newcommand{\DDD}[2]{\DD{#1}(#2)}
Define $\DD{\ell}[u,v]=D_2[u,v]$ if $D_1[u,v]=\ell$, and $\DD{\ell}[u,v]=\infty$ otherwise.
For subsets $S_1,S_2\subseteq V$, let $\DDD{\ell}{S_1,S_2}$ denote the submatrix of $\DD{\ell}$
containing the entries for $(u,v)\in S_1\times S_2$.

\begin{lemma}\label{lem:multisource}
Let $G=(V,E)$ be an undirected graph with edge weights in
$[\ccc]-\{0\}$.  Assume that we have already
computed $D[u,v]$ for all $(u,v)$ with $D_1[u,v]\le 2\ell/3$.
Given a set $S$ of vertices that are within primary distance $c=\OO(1)$ from
each other, we can compute $D[u,v]$ for all $u\in S$ and $v\in V$ with $D_1[u,v]\le\ell$ in $O(\MM(|S|,n/\ell,n\mid \ell))$ total time.
\end{lemma}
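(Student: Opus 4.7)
The approach is to use a reference vertex $s \in S$ together with the undirected triangle inequality to reduce the problem to $O(1)$ Min-Plus products of the desired form $\MM(|S|, n/\ell, n \mid \ell)$.

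First I would run Dijkstra (on lex-ordered edge weights) from $s$ to obtain $D[s,v]$ for all $v \in V$, and build a hitting set $R \subseteq V$ of size $\OO(n/\ell)$ that, for every pair $(u,v) \in S \times V$ with $D_1[u,v] \in (2\ell/3, \ell]$, contains some ``middle'' vertex $w$ on a lex-shortest $u$-to-$v$ path with $D_1[u,w] \in [\ell/3, 2\ell/3]$. Because primary weights lie in $[\ccc]-\{0\}$ with $\ccc = O(1)$, this middle portion spans $\Omega(\ell)$ consecutive vertices along the path, so the standard construction from~\cite{ZwickSTOC99,zwickbridge} applies. For such a witness $w$, both $D_1[u,w]$ and $D_1[w,v]$ are at most $2\ell/3$, so $D[u,w]$ and $D[w,v]$ are already available; pairs with $D_1[u,v] \le 2\ell/3$ are handled outright. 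Since the preliminaries give $D_1[u,v]$ for every pair, it remains only to compute the secondary value
\[ D_2[u,v] \;=\; \min_{w \in R}\,\bigl(D_2[u,w] + D_2[w,v]\bigr) \quad \text{subject to } D_1[u,w] + D_1[w,v] = D_1[u,v]. \]

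The key idea is to eliminate this primary-shortest constraint by the change of variables
\[ \alpha(u,w) \;=\; D_1[u,w] - D_1[s,w], \qquad \beta(w,v) \;=\; D_1[s,w] + D_1[w,v] - D_1[s,v]. \]
The cluster hypothesis $D_1[u,s] \le c$ with the symmetric triangle inequality gives $\alpha \in [-c, c]$, and $\beta \ge 0$ from $D_1[s,v] \le D_1[s,w] + D_1[w,v]$. Substituting shows $D_1[u,w] + D_1[w,v] = D_1[s,v] + \alpha + \beta$, so the primary-shortest constraint becomes the additive condition $\alpha + \beta = D_1[u,v] - D_1[s,v] \in [-c, c]$, which together with $\alpha \ge -c$ forces $\beta \le 2c$. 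Only $O(1)$ pairs $(\alpha, \beta)$ are thus admissible. For each one I would form
\[ A^{(\alpha)}[u,w] \;=\; D_2[u,w] \text{ if } u \in S,\; w \in R,\; D_1[u,w] \le 2\ell/3,\; \alpha(u,w) = \alpha,\; \text{else } \infty, \]
and analogously $B^{(\beta)}[w,v]$. Since the relevant paths have $O(\ell)$ edges and secondary weights are $O(1)$, finite entries of both matrices lie in $[O(\ell)]$, so each product $A^{(\alpha)} \star B^{(\beta)}$ costs $\MM(|S|, n/\ell, n \mid \ell)$. Finally, for each $(u,v)$ with $D_1[u,v] \in (2\ell/3, \ell]$, take $D_2[u,v]$ to be the minimum of $(A^{(\alpha)} \star B^{(\beta)})[u,v]$ over the $O(1)$ pairs satisfying $\alpha + \beta = D_1[u,v] - D_1[s,v]$.

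The main obstacle is bounding $(\alpha,\beta)$ to a constant-size range; this is where undirectedness is essential, since both key inequalities---$|D_1[u,w] - D_1[s,w]| \le D_1[u,s]$ and $D_1[s,v] \le D_1[s,w] + D_1[w,v]$---rely on the undirected triangle inequality. A minor subtlety is that our minimum ranges over all $w \in R$ satisfying the constraint, not just those on a lex-shortest $u$-to-$v$ path; correctness follows from subpath optimality, because the hitting set contains some $w \in R$ on a lex-shortest path (giving $D_2[u,w] + D_2[w,v] = D_2[u,v]$), while any other admissible $w$ satisfies $D_2[u,w] + D_2[w,v] \ge D_2[u,v]$ by lex optimality.
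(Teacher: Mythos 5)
Your proof is correct and achieves the stated bound, but it implements the two key steps by a somewhat different mechanism than the paper, so a comparison is worth recording. Both arguments fix a reference vertex $s\in S$ and use the undirected triangle inequality together with the cluster radius $c$ to (a) shrink the set of candidate split vertices to $\OO(n/\ell)$ and (b) force the split to lie on a primary-shortest path, thereby reducing the task to $O(1)$ Min-Plus products with finite entries in $[O(\ell)]$. For (a) the paper uses no hitting set at all: it buckets $V$ into layers $V_i=\{v: D_1[s,v]\in i\pm c\}$, notes that each vertex lies in $O(1)$ layers, and picks by averaging a middle index $m\in[0.4\ell,0.6\ell]$ whose window of $\ccc+1$ consecutive layers has only $O(n/\ell)$ vertices; every lex-shortest path from $u\in S$ to a $v$ with $D_1[u,v]>m$ must contain a vertex $w$ with $D_1[u,w]=m-\delta$ for some $\delta\in\{0,\dots,\ccc-1\}$, and $w$ lands in that window. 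This is deterministic and loses no logarithmic factor, whereas your set $R$ hitting the middles of lex-shortest paths is either randomized or larger by a $\log n$ factor, and a deterministic construction is not immediate here since those paths are exactly what is being computed (random sampling is the standard fix, and in context the extra polylog is absorbed by the paper's $\OO$ notation). For (b) the paper stores $D_2$ in matrices indexed by the exact value of the primary distance and matches offsets $m-\delta$ and $i-m+\delta$, while you encode the same exact-split constraint through the residues $\alpha=D_1[u,w]-D_1[s,w]$ and $\beta=D_1[s,w]+D_1[w,v]-D_1[s,v]$, confined to $O(c)$ values by the triangle inequality; your correctness argument (every candidate in an admissible class splits $D_1[u,v]$ exactly and hence upper-bounds $D_2[u,v]$, while the witness attains it) is sound. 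One small remark: the initial Dijkstra from $s$ is unnecessary, since only $D_1[s,\cdot]$ is used and $D_1$ is already available for all pairs from the preliminaries; strictly speaking, charging an $O(n^2)$-time Dijkstra to this lemma could exceed $O(\MM(|S|,n/\ell,n\mid\ell))$ when $|S|$ and $n/\ell$ are small, so it is better to drop it.
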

\begin{proof}
Fix $s\in S$.
Let $V_i=\{v\in V: D_1[s,v]\in i\pm c\}$.
Note that $\sum_i |V_i| = \OO(n)$.
Also note that if $u\in S$ and $D_1[u,v]=i$, then we must have $v\in V_i$
(by the triangle inequality, because the graph is undirected).

Pick an index $m\in [0.4\ell,0.6\ell]$ with $|V_{m-\ccc}\cup\cdots\cup V_m|=\OO(n/\ell)$.

For $i\le m$, we have already computed $\DDD{i}{S,V_i}$.

For $i=m+1,\ldots,\ell$, we will compute $\DDD{i}{S,V_i}$ as follows:
For each $\D\in [\ccc]$, we take
the Min-Plus product
$\DDD{m-\D}{S,V_{m-\D}}\star \DDD{i-m+\D}{V_{m-\D},V_i}$.
Note that $\DDD{i-m+\D}{V_{m-\D},V_i}$ is already known, since $i-m+\D < 2\ell/3$.
We take the minimum over all $\D\in [\ccc]$ for those 
$(u,v)\in S\times V_i$ with
$D_1[u,v]=i$.

Instead of doing the product individually for each $i$,
it is more efficient to combine all the matrices
$\DDD{i-m+\D}{V_{m-\D},V_i}$ over all $i>m$.
This gives a single matrix (per $\D$) with $|V_{m-\D}|=\OO(n/\ell)$ rows and
$\sum_{i>m} |V_i| = \OO(n)$ columns.
So, the entire product can be computed in
$O(\MM(|S|,n/\ell,n\mid \ell))$ time.
\end{proof}

\newcommand{\VH}{V_{\mbox{\scriptsize\rm high}}}
\newcommand{\VL}{V_{\mbox{\scriptsize\rm low}}}

Let $L$ be a parameter to be set later.
Let $\VH$ be the set of all vertices of degree more than $n/L$,
and $\VL$ be the set of all vertices of degree at most $n/L$.

\subparagraph*{Phase 1.}
We will first compute $D[u,v]$ for all $u\in \VH$ and $v\in V$ with $D_1[u,v]\le\ell$, as follows:

Let $X\subseteq V$ be a dominating set for $\VH$ of size $\OO(L)$, such that
every vertex in $\VH$ is in the (closed) neighborhood of some vertex in $X$.
Such a dominating set can be constructed (for example, by the standard
greedy algorithm) in $\OO(n^2)$ time~\cite{AingworthCIM99}.

Let $X=\{x_1, x_2, \ldots, x_{\OO(L)}\}$. For each $x_i \in X$, we divide $N(x_i) \setminus \left(\bigcup_{j < i} N(x_j)\right)$ -- its neighborhood excluding previous neighborhoods --  into groups of size $O(n/L)$.  The total number of groups is $\OO(L)$,
and the groups cover all vertices in $\VH$.  For each such group, we apply Lemma~\ref{lem:multisource} (with $c=2\ccc$).
The total time is $\OO(L\cdot \MM(n/L,n/\ell,n\mid \ell))$.

\subparagraph*{Phase 2.}
Next, for each $u\in \VL$, we will compute $D[u,v]$ for all $v\in V$ with
$D_1[u,v]\le\ell$, as follows:

Define a graph $G_u$ containing all edges $(x,y)$ with $x\in \VL$ or $y\in \VL$;
for each $z\in \VH$, we add an extra edge $(u,z)$ with weight
$D[u,z]$, which has been computed in Phase~1.
Then the lexicographical shortest-path distance from $u$ to $v$ in $G_u$
matches the lexicographical shortest-path distance in $G$,
because if $\langle u_1,\ldots,u_k\rangle$ is
a lexicographical shortest path in $G$  
with $u_1=u$, and $i$ is the largest index with $u_i\in \VH$ (set $i=1$ if
none exists),
then $\langle u_1,u_i,\dots,u_k\rangle$ is 
a path in $G_u$.
We run Dijkstra's algorithm on $G_u$ from the source $u$.
Since $G_u$ has $O(n^2/L)$ edges, this takes $\OO(n^2/L)$ time per $u$.
The total over all $u$ is $\OO(n^3/L)$.

As before, standard techniques for generating witnesses for matrix products 
can be applied to
recover the shortest paths~\cite{GalilMargalit,zwickbridge}. 

\subparagraph*{Total time.}
We do the above for all $\ell$'s that are powers of $3/2$.
The overall cost is
\begin{eqnarray*}
\lefteqn{\OO\left(\max_\ell L \cdot \MM(n/L,n/\ell,n\mid \ell) \,+\, n^3/L\right)}\\
&\le& \OO\left(\max_\ell L\cdot \min\left\{ n^3/(L\ell),\ \ell\cdot \MMM(n/L,n/\ell,n)\right\} \,+\, n^3/L\right)\\
&=& \OO\left(\max_{\ell\le L} L\ell \cdot \MMM(n/L,n/\ell,n) \,+\, n^3/L\right)
\ =\ \OO(L^2\cdot \MMM(n/L,n/L,n) \,+\, n^3/L).
\end{eqnarray*}

With the current bounds on rectangular matrix multiplication, we choose $L=n^{0.4206}$ and get running time $O(n^{2.5794})$.



\begin{theorem}
\pnzM-Lex$_2$-APSP (and thus -APLSP and -APSLP) for undirected graphs
 can be solved in 
$O(n^{2.5794})$ time for any $\ccc=\OO(1)$.
\end{theorem}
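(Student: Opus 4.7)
The plan is to prove the theorem by induction on $\ell$ through powers of $3/2$: assuming $D[u,v]$ has been determined for all pairs with $D_1[u,v]\le 2\ell/3$, I extend to all pairs with $D_1[u,v]\le \ell$ via two subroutines parametrized by a degree threshold $L$. As preprocessing, I compute the primary distance matrix $D_1$ in $O(n^\omega)$ time using a known undirected \ppM-APSP algorithm~\cite{AlonGalilMargalit,Seidel} and partition $V$ into $\VH$ (vertices of degree $>n/L$) and $\VL$ (the rest); the base case $\ell=O(1)$ is trivial.

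For Phase 1 (sources in $\VH$), I first construct, via the greedy procedure of Aingworth et al.~\cite{AingworthCIM99}, a dominating set $X$ of size $\OO(L)$ for $\VH$ in $\OO(n^2)$ time. Subtracting earlier neighborhoods makes the $N(x_i)$'s disjoint, and each is subdivided into groups of size at most $n/L$, producing $\OO(L)$ groups that together cover $\VH$. Each such group $S$ has primary diameter at most $2\ccc$, since its members all neighbor a common $x_i$, so Lemma~\ref{lem:multisource} applies with $c=2\ccc$ and computes $D[u,v]$ for every $u\in S$ and every $v\in V$ with $D_1[u,v]\le \ell$ in time $O(\MM(n/L,n/\ell,n\mid \ell))$. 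Summing over the $\OO(L)$ groups gives total Phase 1 cost $\OO(L\cdot \MM(n/L,n/\ell,n\mid \ell))$.

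For Phase 2 (sources in $\VL$), I build, for each $u\in \VL$, the graph $G_u$ consisting of all edges with at least one endpoint in $\VL$, together with a shortcut edge $(u,z)$ of weight $D[u,z]$ for every $z\in \VH$, and run Dijkstra from $u$. The correctness reduction is structural: for any lex-shortest $u$-to-$v$ path $\pi=\langle u_1=u,\ldots,u_k=v\rangle$ in $G$, letting $i$ be the largest index with $u_i\in \VH$ (or $i=1$ if $\pi$ avoids $\VH$), the collapsed path $\langle u_1,u_i,u_{i+1},\ldots,u_k\rangle$ is a valid path in $G_u$ of identical lex cost, since the shortcut weight equals the lex cost of the prefix and the suffix uses no $\VH$ vertex except at $u_i$, hence is entirely $\VL$-incident. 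Since $|E(G_u)|=O(n^2/L)$, each Dijkstra call runs in $\OO(n^2/L)$ time, totalling $\OO(n^3/L)$.

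To finish, I sum these costs across all $\ell\le$ diameter that are powers of $3/2$ and apply Lemma~\ref{lem:prod:sparse} to each Min-Plus product, yielding $\MM(n/L,n/\ell,n\mid \ell)=\OO(\min\{n^3/(L\ell),\,\ell\cdot\MMM(n/L,n/\ell,n)\})$. Multiplying by the outer factor $L$ and maximizing over $\ell\le L$ (attained at $\ell=L$ by monotonicity of $\MMM$ in its arguments) collapses the bound to $\OO(L^2\cdot\MMM(n/L,n/L,n)+n^3/L)$. The main obstacle is then the final balancing: by convexity of $\omega(\cdot,\cdot,\cdot)$ and the current rectangular matrix multiplication bounds~\cite{legallurr}, I expect the minimizer to be $L=n^{0.4206}$, delivering the advertised $O(n^{2.5794})$ runtime, with witness recovery handled via the standard matrix-product witness techniques~\cite{GalilMargalit,zwickbridge}.
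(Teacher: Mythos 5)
Your proposal reproduces the paper's proof essentially step for step: the $\VH$/$\VL$ partition, the dominating-set clustering for Phase~1 via Lemma~\ref{lem:multisource}, the auxiliary graph $G_u$ with shortcut edges for Phase~2, and the final balancing $\OO(L^2\cdot\MMM(n/L,n/L,n)+n^3/L)$ with $L\approx n^{0.4206}$ are all identical to the paper's. One small misattribution: the bound $\MM(n/L,n/\ell,n\mid\ell)=\OO(\min\{n^3/(L\ell),\ \ell\cdot\MMM(n/L,n/\ell,n)\})$ comes from the standard reduction of bounded-entry Min-Plus product to ordinary matrix multiplication (brute force versus the polynomial-encoding trick), not from Lemma~\ref{lem:prod:sparse}, which is tailored to the asymmetric case of two input matrices with very different entry ranges and is invoked in the directed algorithm of Section~\ref{sec:aplsp:dir}, not here.
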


\subparagraph*{Remarks.}
Without rectangular matrix multiplication, the above still
gives a time bound of $\OO(L^3 (n/L)^\omega + n^3/L)$, yielding $\OO(n^{2+1/(4-\omega)})$. 

One could adapt the algorithm to solve Undirected \pnzM-Lex$_k$-APSP for a larger constant $k$, but the running time appears worse than
the bound $\OO(n^{(3+\omega)/2})$ by Chan~\cite{ChanSTOC07}
(because of the need to compute a Min-Plus product between matrices with larger entries in Lemma~\ref{lem:multisource}).  

\IGNORE{

gone...

\subsection{Undirected case, with zero weight}\label{sec:aplsp:undir}

The preceding algorithms do not work when zero primary edge weights
are allowed (zero secondary edge weights are not a problem).
The issue is that in a path, the secondary distance may be much
larger than the primary distance; for example, entries of the $\DD{\ell}$ matrix may no longer be bounded by $O(\ell)$.  On the other hand, Zwick's
$O(n^{2.725})$-time Lex$_2$-APSP algorithm~\cite{ZwickSTOC99} for
directed graphs still works with
zero edge weights.
We give an improvement to Zwick's result in the case of undirected
graphs with edge weights in $[\ccc]\cup\{0\}$; our algorithm
runs in  
$O(n^{2.667})$ time.
 
\subparagraph*{Overview.}
The idea is to combine the algorithm in Section~\ref{sec:aplsp:undir}
with Zwick's algorithm~\cite{ZwickSTOC99}.  Our algorithm in Section~\ref{sec:aplsp:undir} works well when the primary distances are not too small,
whereas Zwick's algorithm can be made faster when the primary distances are small.

Let $H$ and $L$ be parameters to be set later.
Let $\lambda[u,v]$ denote the length of a lexicographical shortest path
between $u$ and $v$.  In this section, the \emph{length} of a path refers
to the number of edges in the path.

\subparagraph*{Case I.}
We first compute $D[u,v]$ for all $(u,v)$ with $D_1[u,v]\le H$, as follows:

We adapt Zwick's algorithm~\cite{ZwickSTOC99}.
Assume that we have computed $D[u,v]$ for all $(u,v)$ with 
$D_1[u,v]\le H$ and $\lambda[u,v]\le 2g/3$.
We want to compute $D[u,v]$ for all $(u,v)$ with $D_1[u,v]\le H$ and
$\lambda[u,v]\le g$.  (We don't known $\lambda[u,v]$ in advance.  More precisely, if $\lambda[u,v]\le g$, the computed value should be correct;
otherwise, the computed value is only guaranteed to be an upper bound.)

Let $R\subseteq V$ be a subset of $\OO(n/g)$ vertices that hits all lexicographical shortest paths of length $g/3$~\cite{zwickbridge,ZwickSTOC99}.
(For example, a random sample works with high probability.)

We then compute the Min-Plus product $D(V,R)\star D(R,V)$.
Here, for each $(u,v)$ with $D_1[u,v]\le H$ and $\lambda[u,v]\le g$,
we map $D[u,v]$ to a number $D_1[u,v]\cdot \ccc g + D_2[u,v]\in [O(g H)]$. (For other $(u,v)$, we can map $D[u,v]$ to $\infty$ when computing this Min-Plus product.)  The total time for this step is $\OO(\MM(n,n/g,n\mid g H)).$ 
If the output for $(u,v)$ is smaller than the current value of $D[u,v]$,
we reset $D[u,v]$ to the smaller value.

To justify correctness, observe that for any lexicographical shortest path $\pi$ of length between $2g/3$ and
$g$ vertices, the middle $g/3$ vertices must contain a vertex of $R$,
which split $\pi$ into two subpaths each of length at most $g/2+g/6=2g/3$.

We do the above for all $g$'s that are powers of $3/2$.
The overall cost is
\begin{eqnarray*}
\OO\left(\max_g \MM(n,n/g,n\mid g H)\right)
&\le& \OO\left(\max_g \min\left\{ n^3/g,\ gH\cdot \MMM(n,n/g,n) \right\}\right)\\
&\le& \OO( LH\cdot \MMM(n,n/L,n)  + n^3/L ).
\end{eqnarray*}

\subparagraph*{Case II.}
We next compute $D[u,v]$ for all $(u,v)$ with $\lambda[u,v] > L$.

This can be done by the same algorithm in Case I, but in the analysis, we  bound the cost by
$\OO\left(\max_{g>L} \MM(n,n/g,n\mid gn)\right)\le\OO(\max_{g>L} n^3/g)=\OO(n^3/L)$.

\subparagraph*{Case III.}
We next compute $D[u,v]$ for all $(u,v)$ such that $D_1[u,v]> H$ and 
$\lambda[u,v]\le L$, as follows:

We adapt the algorithm in Section~\ref{sec:aplsp:undir}.
The main difference is that the entries of the Min-Plus products in the proof
of Lemma~\ref{lem:multisource} are no longer bounded by $O(\ell)$, but by
$O(L)$.  Also, it suffices to try all $\ell$'s that are powers of $3/2$ and are greater than $H$ (since we have already taken care of the $D_1[u,v]\le H$ case).

The running time becomes
\begin{eqnarray*}
\OO\left(\max_{\ell>H} L \cdot \MM(n/L,n/\ell,n\mid L) + n^3/L\right)
&\le& \OO\left( L^2\cdot \MMM(n/L,n/H,n) + n^3/L\right).
\end{eqnarray*}

\subparagraph*{Total time.}
The overall time bound is 
\[ \OO\left( LH\cdot \MMM(n,n/L,n)  + L^2\cdot \MMM(n/L,n/H,n) + n^3/L \right).\]

With the current bounds on rectangular matrix multiplication, we choose $L=n^{0.33307}$ and $H=n^{0.1976}$ and get $O(n^{2.667})$ running time.

More precisely, Le Gall~\cite{LeGall} showed that 
$\omega(1,1,0.65)<2.125676$ and
$\omega(1,1,0.7)<2.156959$.
By convexity,
$\omega(1,1,0.66693) <
\frac{0.7-0.66693}{0.7-0.65} 2.125676 + \frac{0.66693-0.65}{0.7-0.65} 2.156959 < 2.13627$.
So, $0.33307 + 0.19760 + \omega(1,1,0.66693) < 2.667$.
Furthermore, Le Gall showed that $\omega(1,1,0.8) < 2.224790$
and $\omega(1,1,0.85) < 2.260830$.
By convexity,
$\omega(1,1, 0.66693/0.8024)
< \frac{0.85-0.66693/0.8024}{0.85-0.8} 2.224790 +
\frac{0.66693/0.8024-0.8}{0.85-0.8} 2.260830
< 2.24726$.
So, $2\cdot 0.33307 + \omega(0.66693,0.8024,1)
\le 2\cdot 0.33307 + 0.1976 + \omega(0.66693,0.8024,0.8024)
\le 2\cdot 0.33307 + 0.1976 + 0.8024 \omega(0.66693/0.8024,1,1) < 2.667$.


\begin{theorem}
Lex$_2$-APSP (and thus APLSP and APSLP) for undirected graphs
with edge weights in $[\ccc]\cup\{0\}$ can be solved in 
$O(n^{2.667})$ time for any constant $\ccc$.
\end{theorem}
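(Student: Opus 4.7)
The plan is to handle zero primary edge weights by combining the undirected algorithm from Section~\ref{sec:aplsp:undir} with an adaptation of Zwick's directed Lex$_2$-APSP algorithm~\cite{ZwickSTOC99}. The Section~\ref{sec:aplsp:undir} approach breaks when zero primary weights are allowed because entries of $\DD{\ell}$ can far exceed $O(\ell)$, invalidating the cost analysis of the Min-Plus products inside Lemma~\ref{lem:multisource}. On the other hand, Zwick's bridging-set technique is insensitive to zero weights and is faster precisely when primary distances are small, which is the regime where the Section~\ref{sec:aplsp:undir} algorithm is weakest. We therefore split the computation into three regimes controlled by two parameters $H$ and $L$.

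Case~I handles pairs $(u,v)$ with $D_1[u,v]\le H$ using Zwick's framework: for each path-length bound $g$ that is a power of $3/2$, take a bridging set $R\subseteq V$ of size $\OO(n/g)$ hitting all lexicographical shortest paths of length $g/3$, and compute $D(V,R)\star D(R,V)$ where each relevant entry is encoded as $D_1[u,v]\cdot\ccc g + D_2[u,v]\in[\OO(gH)]$. Correctness follows from the usual bridging argument: any lexicographical shortest path of length between $2g/3$ and $g$ has a midsection of length $g/3$ that must contain a vertex of $R$, splitting it into two subpaths of length at most $2g/3$. Applying Lemma~\ref{lem:prod:sparse} and summing over $g\le L$ bounds the total cost by $\OO(LH\cdot\MMM(n,n/L,n) + n^3/L)$.

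Case~II handles pairs with $\lambda[u,v]>L$ via the same Zwick-style procedure restricted to $g>L$; here only the trivial bound on the Min-Plus products (with unrestricted entry magnitudes) is needed, yielding cost $\OO(\max_{g>L} n^3/g) = \OO(n^3/L)$. Case~III handles pairs with $D_1[u,v]>H$ and $\lambda[u,v]\le L$ by rerunning the Section~\ref{sec:aplsp:undir} algorithm, except that the Min-Plus products used inside Lemma~\ref{lem:multisource} now carry entries bounded only by $O(L)$ instead of $O(\ell)$ (since a secondary distance may be $\Theta(L)$ even when the primary distance is small). Since only $\ell>H$ need be considered in this phase, its total cost is $\OO(L^2\cdot\MMM(n/L,n/H,n) + n^3/L)$.

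The main obstacle is balancing $H$ and $L$ against rectangular matrix multiplication exponents so that all three contributions simultaneously match $O(n^{2.667})$. The choice $L=n^{0.33307}$ and $H=n^{0.1976}$ should do it; verifying this requires convexity-based interpolation from Le~Gall's bounds~\cite{LeGall} on $\omega(1,1,x)$ at $x\in\{0.65,0.7,0.8,0.85\}$, applied separately to the two distinct rectangular-product terms $LH\cdot\MMM(n,n/L,n)$ and $L^2\cdot\MMM(n/L,n/H,n)$, and checked to give exponents below $2.667$.
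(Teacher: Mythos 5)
Your proposal is correct and follows essentially the same route as the paper's own proof: the identical three-regime split governed by $H$ and $L$ (Zwick-style bridging for $D_1\le H$, the trivial analysis for $\lambda>L$, and the Section~\ref{sec:aplsp:undir} machinery with entries bounded by $O(L)$ for $\ell>H$), the same cost bounds $\OO(LH\cdot\MMM(n,n/L,n)+L^2\cdot\MMM(n/L,n/H,n)+n^3/L)$, and the same parameter choices $L=n^{0.33307}$, $H=n^{0.1976}$ verified by convexity interpolation of Le~Gall's exponents. The only cosmetic difference is that in Case~I you invoke Lemma~\ref{lem:prod:sparse}, whereas the paper just uses the standard bounded-entry Min-Plus bound $\OO(gH\cdot\MMM(n,n/g,n))$ together with the brute-force bound $\OO(n^3/g)$; this does not affect correctness.
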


\subparagraph*{Remark.} 
Without rectangular matrix multiplication, the above time bound
is $O(LH\cdot L^2(n/L)^\omega + L^2\cdot (L/H)L(n/L)^\omega + n^3/L)
\,=\, O(L^{3-\omega} H n^\omega + L^{4-\omega}n^\omega/H + n^3/L)$.
Setting $H=\sqrt{L}$ and $L=n^{(6-2\omega)/(9-2\omega)}$ gives
$O(n^{(19-4\omega)/(9-2\omega)})$.
If $\omega=2$, then the bound would be $\OO(n^{2.6})$
(which improves Zwick's $\OO(n^{8/3})$ bound for the directed case).
Compare this with our $\Omega(n^{2.5-\eps})$ conditional lower bound
(Section~\ref{??}).


}

\section{Algorithms for APSP Counting}
\label{sec:counting}

In this section, we describe algorithms for the following problem:

\begin{problem}
{\bf ($\#$APSP)}
Given a graph $G=(V,E)$,
we want to count the number $C[u,v]$ of shortest paths from $u$ to $v$
for every pair of vertices $u,v\in V$.
\end{problem}

This problem has direct applications to \emph{Betweenness Centrality}~\cite{Brandes,Brandes08}, a problem that, given a graph and a vertex $v$, asks to compute the quantity $\sum_{s,t\in V\setminus \{v\}} C_v[s,t]/C[s,t]$, where 
$C_v[s,t]$ is the number of shortest paths from $s$ to $t$ that go through $v$ (i.e., $C_v[s,t]=C[s,v]\cdot C[v,t]$ if $D[s,t]=D[s,v]+D[v,t]$, and $C_v[s,t]=0$ otherwise). The known algorithms for this problem in $m$-edge, $n$-node graphs are said to run in $\tilde{O}(mn)$ time~\cite{Brandes}, however these algorithms seem to work in a model where arbitrarily sized integers can be added in constant time.
Note that the counts could be exponentially large in the worst case, and hence if one takes the size of the integers into account, the running time of all prior algorithms is actually $\tilde{O}(mn^2)$ (and there are simple examples on which it actually runs in $\Omega(mn^2)$ time for any choice of $m$).

We will consider several variants of the $\#$APSP  problem:
\CountCap{U} (computing $\min\{C[u,v],U\}$ for a given value $U$),
\CountMod{U} (computing $C[u,v]\bmod U$),
\CountApx{U} (computing a $(1+1/U)$-factor approximation), and the original exact version. Note that
\CountCap{U} reduces to \CountApx{O(U)} easily,
and \CountCap{U} also reduces to \CountMod{\OO(nU)} (by using random moduli in $\widetilde{\Theta}(nU)$). 

Let $D[u,v]$ denote the shortest-path distance from $u$ to $v$.


\subsection{u-\CountCap{U}}
\label{sec:counting:directedcap}

We solve u-\CountCap{U} for directed graphs by a variant of
Zwick's u-APSP algorithm~\cite{zwickbridge}.

\newcommand{\funnystar}{\bullet}
Given a pair $(A,A')$ of $n_1\times n_2$ matrices and a pair $(B,B')$ of  $n_2\times n_3$ matrices $B$, define a new pair of $n_1\times n_3$ matrices $(C,C')=(A,A')\funnystar (B,B')$, where
$C[i,j]=\min_k \{ A[i,k]+B[k,j] \}$
and $C'[i,j]=\sum_{k:A[i,k]+B[k,j]=C[i,j]} A'[i,k] B'[k,j]$.
Assuming that the finite entries of $A$ and $B$ are in $[\ell]$ and the entries of $A'$ and $B'$ are in $[U]$, we can reduce
this ``funny'' product $(C,C')$ to 
a standard matrix product by
mapping $(A[i,j],A'[i,j])$ to $A'[i,j]\cdot M^{A[i,j]}$
and $(B[i,j],B'[i,j])$ to $B'[i,j]\cdot M^{B[i,j]}$,
for a sufficiently large $M=\text{poly}(n_2,U)$.
Thus, the computation time is $\OO(\ell\cdot \MMM(n_1,n_2,n_3))$, ignoring $\log (n_1n_2n_3U)$ factors.  Alternatively, the funny product can be computed trivially in $\OO(n_1n_2n_3)$ time.

Let $\Pi$ contain up to $U$ shortest paths for every pair of vertices.  (This collection $\Pi$ of paths is not given to us but helps in the analysis.)
For every $\ell$ that is a power of 2, 
let $R_\ell\subseteq V$ be a random subset of $c(n/\ell)\log(nU)$ vertices for a sufficiently large constant $c$.  With high probability, we have the following properties (the second of which follows from a Chernoff bound):
(i)~every subpath of a path in $\Pi$ of length $\ell/2$ contains a vertex of $R_\ell$, and (ii)~every path in $\Pi$ of length at most $\ell$ contains at most $b =\Theta(\log(nU))$ vertices of $R_\ell$.
We may assume that $R_{2^i} \supseteq R_{2^{i+1}}$.
Take $R_{2n}=\emptyset$.

We first compute $D[u,v]$ for all $u,v\in V$ by running Zwick's u-APSP algorithm.

Let $C_{\ell'}[u,v]$ denote the number of shortest
paths from $u$ to $v$ (of distance $D[u,v]$) where all intermediate
vertices are in $V-R_{\ell'}$; it is set to $U$ if the number exceeds the cap~$U$.  Ultimately, we want $C_{2n}[u,v]$.

Suppose we have already computed the counts $C_{\ell'}[u,v]$ for all $u,v\in V$ with
$D[u,v]\le \ell/2$, for all $\ell'$'s that are powers of 2.  
We want to compute the counts $C_{\ell'}[u,v]$ for all $u,v\in V$ with
$D[u,v]\in (\ell/2,\ell]$ for all $\ell'$'s that are powers of 2.
 
\newcommand{\hC}{\widehat{C}}
For subsets $S_1,S_2\subseteq V$, let 
$D(S_1,S_2)$ and $C_{\ell'}(S_1,S_2)$ denote the submatrix of $D$ and $C_{\ell'}$ respectively
containing the entries for $(u,v)\in S_1\times S_2$.
Let $\hC_{\ell'}(S_1,S_2)$ denote the matrix pair $(D(S_1,S_2), C_{\ell'}(S_1,S_2))$.

For each $j=0,\ldots,b$,
for $\ell'> \ell$,
we compute the funny product $\hC_{\ell}(V,R_{\ell}-R_{\ell'})\funnystar
\hC_{\ell}(R_{\ell}-R_{\ell'},R_{\ell}-R_{\ell'})^j\funnystar
\hC_{\ell}(R_{\ell}-R_{\ell'},V)$ (the $j$-th power here is with respect to the operator $\funnystar$).  The counts we want are the entries of the second matrix in the output for all $(u,v)$ with $D[u,v]\in (\ell/2,\ell]$, summed over all $j$.  When computing the product, we can place a cap of $U$ in all intermediate second matrices.
The running time is $\OO(\min\{\ell\cdot \MMM(n,n/\ell,n),\, n^3/\ell\})$ (since $j$ is polylogarithmic).  For $\ell'\le\ell$, the counts are set to 0.

To justify correctness, consider a shortest path $\pi$ in $\Pi$ of length in $(\ell/2,\ell]$ where all intermediate vertices are from $V-R_{\ell'}$.
Then $\pi$ must pass through at least one vertex of $R_\ell$ (this implies that $\ell'>\ell$) and at most $b$ vertices of $R_\ell$; furthermore, any subpath $\pi'$ between two consecutive vertices in $R_\ell$, or between the start vertex and the first vertex in $R_\ell$, or between the last vertex in $R_\ell$ and the last vertex, have at most $\ell/2$ vertices.
Note that $\pi'$ has all intermediate vertices from $V-R_{\ell'}-R_\ell=V-R_\ell$.
Thus, each such path $\pi'$ will be counted by one of the above products exactly once.

We do the above for all $\ell$'s that are powers of 2.
The total time is $\OO(\max_\ell \min\{\ell\cdot \MMM(n,n/\ell,n), n^3/\ell\})
$, which can be bounded above by $\OO(L\cdot \MMM(n,n/L,n) + n^3/L)$ for any choice of $L$.  This bound is similar to that of Zwick's u-APSP algorithm.
We set $L=n^{0.471}$.


\begin{theorem}
u-\CountCap{U} can be solved in $O(n^{2.529}\log^{O(1)}U)$ time with high probability.
\end{theorem}

\subparagraph*{Remarks.}
The above is not quite a reduction to directed unweighted APSP, since
$\ell\cdot \MMM(n,n/\ell,n)$ is not quite the same as
$\MM(n,n/\ell,n\mid \ell)$.
For the case of $U=\OO(1)$, we can turn the algorithm into a (randomized) reduction:

\newcommand{\ffunnystar}{\otimes}
Define another product $C''=A\ffunnystar B$, where
$C''[i,j]$ is the number of $k$'s for which $A[i,k]+B[k,j]=C[i,j]$, where
$C[i,j]=\min_k \{A[i,k]+B[k,j]\}$.
If the number $C''[i,j]$ is capped at $U$,
we can reduce $\ffunnystar$ to $O(U^2)$ Min-Plus products:
take a random partition $R_1,\ldots,R_{cU^2}$, and for each $\ell\in [cU^2]$,
compute $\min_{k\in R_\ell} \{A[i,k]+B[k,j]\}$.
Set $C''[i,j]$ to be the number of such witnesses $k$
with $\min_{k\in R_\ell}A[i,k]+B[k,j]=C[i,j]$.
For each $C[i, j]$, this is correct with probability greater than $1/2$ for a sufficiently large constant $c$ (we can repeat logarithmically many times to lower the error probability).

We can reduce $\funnystar$ to polylogarithmically many $\ffunnystar$ products:
$C'[i,j]=\sum_{p,q\in [\log U]} 2^{p+q}\sum_k (A_p\ffunnystar B_q)[i,j]$,
where $A_p[i,k]=A[i,k]$ if the $p$-th bit of $A'[i,k]$ is 1, and 
$A_p[i,k]=\infty$ otherwise, and
$B_q[k,j]=B[k,j]$ if the $q$-th bit of $B'[k,j]$ is 1, and 
$B_q[k,j]=\infty$ otherwise.

Section~\ref{sec:reductions} gives a reduction in the other direction, and thus u-\CountCap{U}
for $2\le U\le \OO(1)$ is equivalent to u-APSP,
completing the proof of Theorem~\ref{thm:apspcapu}.

\subsection{Undirected u-\CountCap{U} and u-\CountMod{U}}
\label{sec:counting:undirected}


For undirected unweighted graphs, Seidel's algorithm with Zwick's modification \cite{zwickpersonal} is as follows:
given a Boolean adjacency matrix $A$ of $G$, compute $A^2=A\vee (A\cdot A)$ here $\vee$ is componentwise OR and $\cdot$ is the Boolean matrix product.  Let $G^2$ be the graph defined by $A^2$. Recursively compute the pairwise distances $d_2(u,v)$ in $G^2$. The base case is when the diameter is $1$ which is easy to handle (we work on connected graphs, since we can work on each connected component separately, and the diameter roughly halves in each step.)

Then, we note that the distance $D[u,v]$ is odd iff $v$ has a neighbor $x$ such that $d_2(u,v)\equiv d_2(u,x)+1\pmod 3$ as for every neighbor $w$ of $v$, $d_2(u,w)\in \{d_2(u,v)-1,d_2(u,v),d_2(u,v)+1\}$.

So we define for each $j \in \{0, 1, 2\}$, $B_j[u,v]=1$ if $d_2(u,v)\equiv j\pmod 3$ and $B_j[u,v]=0$ otherwise. Then we multiply for each $j \in \{0, 1, 2\}$, $C_j=B_j \cdot A$, and for every $u,v$, we compute $j= (d_2(u,v)-1) \bmod 3$ and if $C_j[u,v]=1$, we conclude that $D[u,v]$ is odd. Otherwise, we conclude that $D[u,v]$ is even. For every even $D[u,v]$, we can compute it by setting $D[u,v]=2d_2(u,v)$, and for odd $D[u,v]$ we compute it by setting $D[u,v]=2d_2(u,v)-1$.

Now we want to compute the counts of the paths together with Seidel's approach. Given $A$ which is now viewed as an integer matrix where $A[i,j]$ is the multiplicity of edge $(i,j)$ ($0$ if no edge), we compute $\bar{A}^2=A+A^2$ over the integers. This defines $G^2$, a graph with new multiplicity adjacency matrix $\bar{A}^2$. 
Recurse on $G^2$, obtaining the distances $d_2(u,v)$ and shortest paths counts $c_2(u,v)$ in $G^2$. As in Seidel's algorithm we can compute $D[u,v]$ (the distances in $G$) from $d_2(u,v)$. Now we also want to compute the shortest path counts in $G$. If $D(u,v)$ is even, then $d_2(u,v)=D[u,v]/2$ and all $u-v$ paths in $G^2$ correspond to $u-v$ paths in $G$, and $C[u,v]=c_2(u,v)$, so we can just set these counts because we know which distances are even. 

If $D[u,v]$ is odd, on the other hand, for every predecessor $x$ on a shortest $u-v$ path: (1) the number of $u-x$ paths is $C[u,x]=c_2(u,x)$, and (2) the number of $u-v$ paths going through $x$ is $c_2(u,x)\cdot A[x,v]$.
From the above version of Seidel's algorithm we know that  $x$ is a predecessor of $v$ on a $u-v$ shortest path iff $D[u,x]\equiv D[u,v]-1\pmod 3$. So we can compute the count $C[u,v]$ as follows. For each $j\in \{0, 1, 2\}$, let $D_j$ be the matrix defined as $D_j[u,x]=c_2(u,x)$ if $D[u,x] \equiv j\pmod 3$ and $0$ otherwise. Then set $X_j$ to be the product $D_j A$. Now, for every $u,v$ for which $D[u,v]$ is odd, let $j=(D[u,v]-1)\bmod 3$, and look at $X_j[u,v]$. By our discussion, this will be the sum over all $x$ such that $x$ is a neighbor of $v$ and $D[u,x]\equiv D[u,v]-1\pmod 3$, of $c[u,x]\cdot A[x,v]$ which is exactly the number of shortest paths from $u$ to $v$, as the graph is undirected. Hence we can return $D$ and $C$.

If we do computations modulo $U$, the integers will be bounded by $U$ and the runtime will be $\tilde{O}(n^\omega \log U)$.

If we set all counts that are greater than $U$ as $U$, we will get the APSP counts capped at~$U$.

\begin{theorem}
u-\CountMod{U} and u-\CountCap{U} in undirected graphs can be computed in $\tilde{O}(n^\omega \log U)$ time. 
\end{theorem}

\subsection{Undirected u-\CountApx{U}}
\label{sec:counting:undirectedapprox}

We can solve u-\CountApx{U} for undirected graphs, interestingly, by 
adapting the APLSP algorithm in Appendix~\ref{sec:aplsp:undir}.
The only main change is to replace the Min-Plus products on the secondary distances
with standard matrix products on the counts, with approximation factor $1+O(1/U)$.  

Let $\MMMM(n_1,n_2,n_3\mid \ell)$ be the time to compute the standard product
of an $n_1\times n_2$ matrix with an $n_2\times n_3$ matrix where all finite entries are from $[2^\ell]$.  It is known that $\MMMM(n_1,n_2,n_3\mid \ell)
= \OO(\ell\cdot \MMM(n_1,n_2,n_3)\})$.
(The naive $\OO(n_1n_2n_3)$ bound still holds when computing the product approximately, with factor $1+O(1/U)$, ignoring $\polylog U$ factors.)
In the analysis, we just replace $\MM$
with $\MMMM$, since the number of paths of length $\ell$ is bounded by $2^{\OO(\ell)}$.

In the unweighted case, we can simplify by setting $\D=0$.  Dijkstra's algorithm can be generalized for approximate counting.
The approximation factor may increase to $(1+O(1/U))^{O(n)}$, which is acceptable after readjusting $U$ by an $O(n)$ factor.

\begin{theorem}\label{thm:countapx:undir}
u-\CountApx{U} for undirected graphs can be solved in 
$O(n^{2.5794}\log^{O(1)}U)$ time.
\end{theorem}

\subsection{Exact u-$\#$APSP}
\label{sec:counting:exact}

For exact counts that could be exponentially large,
we will describe a combinatorial $\OO(n^3)$-time algorithm to solve u-$\#$APSP for directed unweighted graphs, in the standard word RAM model (with $(\log n)$-bit words).
The algorithm can be viewed
as a special case of the method in Appendix~\ref{sec:aplsp:undir}
with $L=1$ (no matrix multiplication and no dominating sets are required, and the method turns out
to work for the directed case). 

We first compute $D[u,v]$ for all $u,v\in V$ in $O(n^3)$ time
by known APSP algorithms. There are of course faster APSP algorithms for directed unweighted graphs, but we use the slower $O(n^3)$ time algorithm to keep the whole algorithm combinatorial. 

Assume we have already
computed $C[u,v]$ for all $u,v$ with $D[u,v]\le 2\ell/3$ for a given $\ell$.
Fix a source vertex $s\in V$.
We will compute $C[s,v]$ for all $v$ with $D[s,v]\le \ell$, as follows:

Let $V_i=\{v\in V: D[s,v]=i\}$.
Note that $\sum_i |V_i| = n$, so there exist an index $m\in [0.4\ell,0.6\ell]$ with $|V_m|=O(n/\ell)$.

For $i\le m$, we have already computed $C[s,v]$ for all $v\in V_i$.

For $i=m+1,\ldots,\ell$, we compute $C[s,v]$ for all $v\in V_i$ by setting
$C[s,v]=\sum_{u\in V_m: D[u,v]=i-m} C[s,u]\cdot C[u,v]$.
Note that $C[s,u]$ and $C[u,v]$ have been computed from the previous
iteration, since $i-m < 2\ell/3$.
The total number of arithmetic operations 
is $O(\sum_i |V_i|\cdot |V_m|)=O(n^2/\ell)$.
Since the counts are bounded by $O(n^\ell)$ and are $\OO(\ell)$-bit numbers,
the total cost is $\OO(n^2/\ell\cdot \ell)=\OO(n^2)$.




We do this for every source $s\in V$.  The overall cost is $\OO(n^3)$.

We do the above for all $\ell$'s that are powers of $3/2$.
The final time bound is $\OO(n^3)$.

\begin{theorem}
u-\#APSP can be solved in 
$\OO(n^3)$ time.
\end{theorem}

\subparagraph*{Remarks.}
This is worst-case optimal up to polylogarithmic factors, as
the total number of bits in the answers could be $\Omega(n^3)$.

Recall the Betweenness Centrality of a vertex $v$ is defined as 
$\text{BC}(v)=\sum_{s,t\neq v} C_v[s,t]/C[s,t]$ where $C_v[s, t]$ is the number of shortest paths  between $s$ and $t$ that go through $v$.
As an immediate corollary, we can compute
the Betweenness Centrality of a given vertex exactly in a directed unweighted graph
in $\OO(n^3)$ time
(or approximately in an undirected unweighted graph with factor $1+1/U$ in $O(n^{2.5794}\log^{O(1)}U)$ time
by Theorem~\ref{thm:countapx:undir}).

\begin{corollary}
The betweenness centrality of a vertex can be computed in $\OO(n^3)$ time in a directed unweighted graph. Furthermore, it can be approximated with factor $1+1/U$ in an undirected unweighted graph  in $O(n^{2.5794}\log^{O(1)}U)$ time.
\end{corollary}

In Appendix~\ref{sec:alternative}, we give more algorithms for
u-\CountMod{U} and
u-\CountApx{U} for directed graphs.

Many of the algorithms in this section (and in Appendix~\ref{sec:alternative}) can be extended to graphs with weights in $[\ccc]-\{0\}$
for any $\ccc=\OO(1)$
(for example, the $\OO(n^3)$ algorithm still works with
$([\ccc]-\{0\})$-$\#$APSP after some modifications), but in the interest of simplicity, we will not go into the details.  

\section{Alternative \pnzM-Lex$_2$-APSP Algorithm and its Applications}
\label{sec:alternative}

\subsection{\pnzM-Lex$_2$-APSP}

\label{sec:aplsp:dir2}

We describe an alternative $O(n^{2.6581})$-time algorithm for \pnzM-Lex$_2$-APSP
for directed graphs without zero edge weights. 
Although it has identical running time as the (more general) algorithm in Appendix~\ref{sec:aplsp:dir}, the approach has applications to certain versions of $\#$APSP, as we will see later.

\subparagraph*{Overview.}
The general plan consists of two phases.
In the first phase, we compute all shortest-path distances for which
the primary distances are (roughly) powers of 2; this is done by ``repeated squaring'' with the Min-Plus product (involving the secondary weights).
In the second phase, we compute all shortest-path distances for
all primary distances divisible by $2^i$, for $i=\log n$ down to 1,
by more Min-Plus products using the matrices computed in the first phase.
There were previous APSP algorithms that follow a similar plan for 
undirected graphs (e.g., Shoshan and Zwick's algorithm~\cite{shoshanzwick}), but the difficulty in our problem is that there is
no clear way to reduce the magnitude of the numbers in the Min-Plus products
involving the secondary weights.  We suggest an extra, simple idea:
pick a random number $x$; then the number of entries with primary distance equal to $x$ will be small (this is a simplified statement---in Lemma~\ref{lem:gamma} below, the random numbers we use are obtained by multiplying distances with a random scaling factor~$\gamma$).  This way, we
have reduced the 
\emph{sparsity} of the matrices in the Min-Plus products, and can then apply
Lemma~\ref{lem:prod:sparse} (specifically, the second bound that is sensitive to the number of finite entries of the matrices).  

\subparagraph*{Preliminaries.}
We first compute $D_1[u,v]$ for all $(u,v)$, by running Zwick's \ppM-APSP algorithm on the primary weights.

\begin{lemma}\label{lem:gamma}
There exists a real $\gamma\in[1,2]$ such that for each $i$,
the number of $(u,v)$ with $D_1[u,v]\in \{\floor{\gamma j 2^i}: j \in \mathbb{Z}^+\} \pm O(1) $ is $\OO(n^2/2^i)$. 
Such a $\gamma$ can be
computed in $\OO(n^2)$ time.
\end{lemma}
\begin{proof}
Pick a random $\gamma\in \{1,1+\frac1n, \ldots,2\}$.
For any fixed $a\in [c_0n]$ and $i$, we have $\Pr [\exists j\ge 1: \floor{\gamma j2^i}=a] = \Pr[\exists j\ge 1: \frac a{j2^i}\le\gamma < \frac{a+1}{j2^i}] = O(\sum_{j=1}^{\floor{c_0n/2^i}} \frac 1{j2^i}) = O((1/2^i)\log n)$.

Thus, for each fixed $(u,v)$, the probability that $D_1[u,v] \in \{\floor{\gamma j 2^i}: j \in \mathbb{Z}^+\} \pm O(1)$ is $O((1/2^i)\log n)$.
So, the expected number of $(u,v)$ with such $D_1[u,v]$ is $O((n^2/2^i)\log n)$.
By Markov's inequality, the probability that the number exceeds $c(n^2/2^i)\log^2n$ is $O(1/(c\log n))$ for a fixed $i$.  We take the union bound over all $i\le \log(c_0n)$.

To derandomize, we first count the number of $(u,v)$ with $D_1[u,v]=a$, 
for every $a\in [c_0n]$, in $O(n^2)$ time.  Afterwards, for each $\gamma\in \{1,1+\frac1n, \ldots,2\}$ and each $i\le\log(c_0n)$, we can check whether $\gamma$ satisfies the property in $O(n)$ time.
\end{proof}

Define $\DD{\ell}[u,v]=D_2[u,v]$ if $D_1[u,v]=\ell$, and $\infty$ otherwise.
We are now ready to present our algorithm.

\subparagraph*{Phase 1.}
For $i=0,\ldots,\log n$,
we will compute $\DD{\floor{\gamma 2^i}+b}$ for all $b\in\{-4c_0,\ldots,4c_0\}$ as follows:

For the base case $i=0$, we can compute $\DD{b}$ for all $b=O(1)$
by $O(1)$ number of Min-Plus products with finite entries bounded by $O(1)$,
in $\OO(n^\omega)$ time.

Fix arbitrary $i\ge 1$.
For each $\D\in [c_0]$,
we will compute selected entries of the Min-Plus product $\DD{\floor{\gamma 2^{i-1}}+\floor{b/2}-\D+e_i}\star 
\DD{\floor{\gamma 2^{i-1}}+\ceil{b/2}+\D}$,
where $e_i = \floor{\gamma 2^i} - 2\floor{\gamma 2^{i-1}}\in \{0,1\}$.
Note that $\DD{\floor{\gamma 2^{i-1}}+\floor{b/2}-\D+e_i}$ and 
$\DD{\floor{\gamma 2^{i-1}}+\ceil{b/2}+\D}$ have been computed in the previous iteration.
We compute only those output entries for
$(u,v)$ with $D_1[u,v]=\floor{\gamma 2^i}+b$.
For each such $(u,v)$, we take the minimum of the output entries over all $\D\in [c_0]$. 

By Lemma~\ref{lem:gamma}, $\DD{\floor{\gamma 2^{i-1}}\pm O(1)}$ and
$\DD{\floor{\gamma 2^i}\pm O(1)}$ have
$\OO(n^2/2^i)$ finite entries, all from $[O(2^i)]$.  
So, the $\OO(n^2/2^i)$  output entries we want can be computed 
in time
\[\OO(\MM(n,n,n\mid n^2/2^i, n^2/2^i, n^2/2^i\mid 2^i,2^i)).\]

\subparagraph*{Phase 2.}
For $i=\log n,\ldots,0$, 
we compute $\DD{\floor{\gamma j2^i}+b}$
for all $j\in [n/2^i]$ and for all $b\in\{-4c_0,\ldots,4c_0\}$, as follows:

For $j\le 1$, the answers have already been computed.

For $j$ even, the answers have been computed in the previous iteration.

Suppose $j>1$ is odd.
For each $\D\in [c_0]$, we compute selected entries of the Min-Plus product
$\DD{\floor{\gamma (j-1)2^i} +\floor{b/2}-\D +e_{ij}}
\star \DD{\floor{\gamma 2^i} +\ceil{b/2}+\D}$,
where $e_{ij}=\floor{\gamma j2^i} - \floor{\gamma (j-1)2^i}-
\floor{\gamma 2^i} \in \{0,1\}$.
Note that $\DD{\floor{\gamma (j-1)2^i} +\floor{b/2}-\D}$ has been
computed in the previous iteration, and
$\DD{\floor{\gamma 2^i} +\ceil{b/2}+\D + e_{ij}}$ has been computed
in Phase~1.
We compute only those output entries for
$(u,v)$ with $D_1[u,v]=\floor{\gamma j2^i}+b$.
For each such $(u,v)$, we take the minimum of the output entries over all $\D\in [c_0]$. 

Instead of doing the product individually for each $j$,
it is more efficient to combine all the matrices
$\DD{\floor{\gamma (j-1)2^i} +\floor{b/2}-\D +e_{ij}}$ over 
 $j\in [n/2^i]$.  This gives a single matrix (per $i,b,\Delta$)
with $O((n/2^i)n)$ rows and $n$ columns; by Lemma~\ref{lem:gamma},
this matrix has $\OO(n^2/2^i)$ finite entries,
all from $[O(n)]$.
The second matrix $\DD{\floor{\gamma 2^i} +\ceil{b/2}+\D}$
has $\OO(n^2/2^i)$ finite entries,
all from $[O(2^i)]$.
So, the $\OO(n^2/2^i)$ output entries we want can be computed
in time
\[\OO(\MM(n^2/2^i,n,n\mid n^2/2^i, n^2/2^i, n^2/2^i\mid n,2^i)).\]

By the end of Phase 2, we have computed $\DD{\ell}$ for all $\ell$. 
Standard techniques for generating witnesses for matrix products can be applied to
recover the paths corresponding to the lexicographical shortest-path distances~\cite{GalilMargalit}. 

\subparagraph*{Total time.}
The cost of Phase~2 dominates.
By Lemma~\ref{lem:prod:sparse} (with the two matrices reversed), the total cost is bounded by
\begin{eqnarray*}
&&\OO\left(\max_\ell \MM(n^2/\ell,n,n\mid n^2/\ell, n^2/\ell, n^2/\ell\mid n,\ell)\right)\\
&\le&\OO\left(\max_\ell \min\left\{n^3/\ell,\ \min_t(\ell\cdot\MMM(n^2/(\ell t),n, n) + tn^2/\ell)\right\}\right).
\end{eqnarray*}
We choose $t=\ell n/L$ for some parameter $L$ to be determined.
The cost is at most 
\[\OO\left(\max_{\ell\le L}\ell\cdot \MMM(Ln/\ell^2,n, n) + n^3/L\right).\]
The maximum occurs when $\ell=1$, and so we should choose $L$ to minimize $\OO(\MMM(Ln,n,n) + n^3/L)$.
With the current bounds on rectangular matrix multiplication, we choose $L=n^{0.342}$ and get $O(n^{2.6581})$ running time.

\IGNORE{
More precisely, 
Le Gall~\cite{LeGall} showed that $\omega(1.3,1,1) < 2.624703$
and $\omega(1.4,1,1) < 2.711707$.
By convexity, $\omega(1.34026,1,1) <
\frac{1.4-1.34026}{1.4-1.3} 2.624703 + \frac{1.34026-1.3}{1.4-1.3} 2.711707
< 2.6598$.
Furthermore, by convexity, for all $x\in [0,0.34026]$,
$x+\omega(1.34026-2x,1,1)\le 
\max\{\omega(1.34026,1,1),\, 0.34026+\omega(0.65974,1,1)\} < 2.6598$.
}

\subsection{u-\CountMod{U}}


Zwick's APSP algorithm does not seem generalizable to \CountMod{U}: when there are exponentially many shortest paths, the existence of a small hitting set is unclear.

We solve \CountMod{U} for directed unweighted graphs, interestingly by 
adapting the APLSP algorithm in Appendix~\ref{sec:aplsp:dir2}.
The only main change is to replace the Min-Plus products on the secondary distances
with standard matrix products on the counts (over $\mathbb{Z}_U$).
In the unweighted case, we can simplify by setting $\D=0$.
In the analysis,
let $\MMM_U(n_1,n_2,n_3\mid m_1,m_2,m_3)$
be the time to compute $m_3$ given entries
of the standard matrix product
of an $n_1\times n_2$ matrix with
an $n_2\times n_3$ matrix, where
all entries are integers in $[U]$.
By an analog to Lemma~\ref{lem:prod:sparse},
\[\MMM_U(n_1,n_2,n_3 \mid m_1,m_2,m_3) \:=\: \displaystyle\OO\left( \min_t (\MMM(n_1, n_2, m_2/t) + t m_3)\right),\]
where the $\OO$ notation may hide $\log^{O(1)}(n_1n_2n_3U)$ factors.
The cost of the entire algorithm is no worse than in Appendix~\ref{sec:aplsp:dir2}, ignoring $\log^{O(1)}U$ factors.
We thus obtain:

\begin{theorem}
u-\CountMod{U} can be solved in 
$O(n^{2.6581}\log^{O(1)}U)$ time.
\end{theorem}


\subsection{u-\CountApx{U}}


We can also solve u-\CountApx{U} for directed unweighted graphs by 
adapting the APLSP algorithm in Appendix~\ref{sec:aplsp:dir2}.
As before, the main change is to replace the Min-Plus products on the secondary distances
with standard matrix products on the counts, but this time with approximation.

Let $\MMMM(n_1,n_2,n_3\mid \ell)$ be the time to compute the standard product
of an $n_1\times n_2$ matrix with an $n_2\times n_3$ matrix where all 
finite entries are from $[2^\ell]$.  It is known that $\MMMM(n_1,n_2,n_3\mid \ell)
\le \OO(\ell\cdot \MMM(n_1,n_2,n_3))$.

Let $\MMMM_U(n_1,n_2,n_3 \mid m_1,m_2,m_3\mid \ell_1,\ell_2)$ be the time to compute $m_3$ given entries of
the standard product of an $n_1\times n_2$ matrix $A$ with an $n_2\times n_3$ matrix $B$, where $A$ has at most $m_1$ finite entries, all from $[2^{\ell_1}]$,
and $B$ has at most $m_2$ finite entries, all from $[2^{\ell_2}]$, allowing approximation factor $1+1/U$.

\begin{lemma}\label{lem:prod:sparse:apx}
\[\MMMM_U(n_1,n_2,n_3 \mid m_1,m_2,m_3\mid \ell_1,\ell_2) \ =\ \displaystyle\OO\left( \min_t (\MMMM(n_1, n_2, m_2/t\mid \ell_1) + t m_3)\right),\]
where the $\OO$ notation may hide $\log^{O(1)}(n_1n_2n_3U)$ factors.
\end{lemma}
\begin{proof}
The proof is similar to that of Lemma~\ref{lem:prod:sparse}. 

For each $i\in [n_1]$ and $j'\in [m_2/t]$,
let $C[i,j']$ be true iff there exists $k\in [n_2]$ such that $A[i,k]>0$ and  group $j'$ contains an element with row index $k$.  Computing $C$ reduces to taking a Boolean matrix product and has cost
$O(\MMM(n_1,n_2,m_2/t))$.

For each $i\in [n_1]$ and $j'\in [m_2/t]$, suppose that group $j'$ is part of row $j$ and the minimum element in group $j'$ is $x$;
let $\CC[i,j'] = \sum_{k: B[k,j]\in [x/(2^{\ell_1}n_2U), x]}
A[i,k] \cdot B[k,j]$.
Computing $\CC$ reduces to multiplying two matrices with reals entries in $[1,2^{\ell_1}n_2U]$ after rescaling, or integers 
in $[2^{\ell_1}n_2U^2]$ after
rounding (since we allow approximation factor $1+1/U$), and has cost
$\OO(\MMMM(n_1,n_2,m_2/t\mid \ell_1))$.

To compute the output entry at position $(i,j)\in [n_1]\times [n_3]$, we find the group $j'$ in column $j$ with the largest rank  such that $C[i,j']$ is true.
Let $x$ be the minimum element in group $j'$.  
We compute $\sum_k A[i,k]\cdot B[k,j]$ over every index $k$ that
(i)~corresponds to an element in group $j'$, or
(ii)~corresponds to a leftover element in column $j$, or
(iii)~has $B[k,j]\in [x/(2^{\ell_1}n_2U),x]$.
Note that the sum over $k$ with $B[k,j] < x/(2^{\ell_1}n_2 U)$ is bounded by $x/U$,
and can be omitted when approximating with factor $1+O(1/U)$.
Terms associated with cases (i) and (ii) can be handled by in $O(t)$ time; terms associated with case (iii) can be handled by looking up $\CC[i,j']$.
The total time to compute $m_3$ output entries is
$O(tm_3)$.
\end{proof}

The remaining analysis is similar, noting that the number of paths of length $\ell$ is bounded by $2^{\OO(\ell)}$. The approximation factor may increase to
$(1+O(1/U))^{O(n)}$, which is acceptable after readjusting $U$ by an $O(n)$ factor.

\begin{theorem}
u-\CountApx{U} can be solved in 
$O(n^{2.6581}\log^{O(1)}U)$ time.
\end{theorem}

\section{\RedAPSP{1}}

We showed that \RedAPSP{c} is equivalent to u-APSP for directed graphs when $2 \le c = \OO(1)$, so it requires $\Omega(n^{2.5})$ time unless there is a breakthrough for u-APSP. In this section, we show that \RedAPSP{1} is actually an easier problem.

\begin{theorem}
There is an $\tilde{O}(n^\omega)$ time algorithm for \RedAPSP{1} in unweighted undirected graphs.
\end{theorem}
\begin{proof}
We adapt Seidel's algorithm.
In Seidel's algorithm, given the adjacency matrix $A$, we compute $A^2=A\vee (A\cdot A)$ which represents the adjacency matrix of vertices with distances at most $2$. Then we solve APSP in the graph with adjacency matrix $A^2$ recursively, and use that result to compute APSP in the original graph. 

In the \RedAPSP{1} problem, the graph contains more information than an adjacency matrix. Let $R$ be the adjacency matrix for red edges, and $B$ be the adjacency matrix for blue edges. We will define $(R, B)^2$, which basically represents the graph where we combine two adjacent edges into a single edge. Let $(R, B)^2 = (R \vee (R \cdot B) \vee (B \cdot R), B \vee (B \cdot B))$, and let $D_2[u, v]$ denote the shortest paths distance between $u$ and $v$ using at most $1$ red edge in the graph represented by $(R, B)^2$. We can compute $D_2$ recursively. 
In the base case, the distance between any two vertices is either $1$ or $\infty$. The base case is achieved after roughly $O(\log n)$ recursive calls, since after each call, the distances of reachable pairs are roughly halved. 

For simplicity, we call paths that use at most one red edges ``valid''.

Suppose  we are given $D_2$, we can compute the real distances $D$ as follows. If $D_2[u, v] = \infty$, then so is $D[u, v]$. In the following, we handle the case when $D_2[u, v] < \infty$. 
We first compute some value $\bar{D}[u, v]$, which is defined as follows: if $u$ has a blue neighbor $x$ (i.e. the edge connecting $u$ and $x$ is blue) such that $D_2[u, v]\equiv D_2[x, v]+1 \pmod{3}$, we set $\bar{D}[u, v] = 2D_2[u, v]-1$; otherwise, we set $\bar{D}[u, v] = 2D_2[u, v]$. Clearly $\bar{D}$ can be computed in $O(n^\omega)$ time using Boolean matrix multiplication. 

\begin{claim}
\label{cl:1redapsp}
$\bar{D}[u, v] \ge D[u, v]$. Also, if there exists a shortest valid  path from $u$ to $v$ that uses a blue edge as its first edge, then $\bar{D}[u, v] = D[u, v]$.
\end{claim}
\begin{proof}

First, notice that $D_2[u, v] = \lceil D[u, v] / 2\rceil$ for any pairs of $u, v$. By triangle inequality, for any blue neighbor $x$ of $u$, $|D[u, v] - D[x, v]| \le 1$, which implies that $|D_2[u, v] - D_2[x, v]| \le 1$. Therefore, we can ignore the mod $3$ condition in the construction of $\bar{D}$. 

The only case when $\bar{D}[u, v] < D[u, v]$ could happen is when $u$ has a blue neighbor $x$ such that $D_2[u, v]=D_2[x, v]+1$ and $D[u, v] = 2D_2[u, v]$. However, in this case, $$D[u, v] \le 1+D[x, v] \le 1+2D_2[x, v] = 1+2(D_2[u, v]-1) = D[u, v] - 1, $$
which is a contradiction. 

Now suppose there exists a shortest valid  path from $u$ to $v$ that goes to a blue neighbor $x$ first. If $D[u, v]$ is odd, then $D_2[x, v] < D_2[u, v]$, so we will  set $\bar{D}[u, v] = 2D_2[u, v] - 1$, which equals $D[u, v]$. Similarly, if $D[u, v]$ is even, then there cannot exist
$x$ such that $D_2[u, v] \equiv D_2[x, v]+1\pmod{3}$, so we do the correct thing by setting $\bar{D}[u, v] = 2D_2[u, v]$. 
\end{proof}

We can compute $D$ from $\bar{D}$ as follows. If there exists an edge between $u$ and $v$, then we set $D[u, v] = 1$; otherwise, we set $D[u, v] = \min\{\bar{D}[u, v], \bar{D}[v, u]\}$. It is correct since if the length of the shortest valid path is at least $2$, either the first edge is blue or the last edge is blue. 

\end{proof} 

\section{Simpler Unweighted $[\ccc]$-APSP}\label{sec:simple:undir}

Shoshan and Zwick~\cite{shoshanzwick} gave an algorithm for the standard $[\ccc]$-APSP problem for undirected graphs running in $\OO(\ccc n^\omega)$ time.
In this section, we describe a simple alternative based on our two-phase approach
from Sections~\ref{sec:approx} and~\ref{sec:aplsp:dir}.

\begin{lemma}\label{lem:minplus:variant}
Let $A$ be an $n_1\times n_2$ matrix with integer entries from $[\ell]\cup\{\infty\}$,
and let $B$ be an
$n_2\times n_3$ matrix with (possibly large) integer entries,
satisfying the following property: for every $i,j,k,k'$, 
\begin{equation}\label{tri:ineq}
B[k,j]\le A[i,k]+A[i,k']+B[k',j].
\end{equation}
Then the Min-Plus product of $A$ and $B$ can be computed
in $O(\MM(n_1,n_2,n_3\mid \ell))$ time.
\end{lemma}
\begin{proof}
\newcommand{\BB}{\widehat{B}}
Define $\BB^{(t)}[i,j]=(B[i,j]+t\ell)\bmod{6\ell}$.
For each $t=0,\ldots,5$,
compute the Min-Plus product $A\star\BB^{(t)}$ in $O(\MM(n_1,n_2,n_3\mid \ell))$ time.

Consider an $(i,j)\in [n_1]\times [n_3]$.
Let $k_0$ be any index such that $A[i,k_0]$ is finite.
Let $t\in\{0,\ldots,5\}$ with $((B[k_0,j]+t\ell)\bmod{6\ell})\in [2\ell,3\ell]$.
For all $k$ such that $A[i,k]$ is finite,
(\ref{tri:ineq}) implies that $|B[k_0,j]-B[k,j]|\le 2\ell$,
and so $B[k,j] - \BB^{(t)}[k,j] = B[k_0,j] - \BB^{(t)}[k_0,j]$.
Thus, the index $k$ minimizing $A[i,k]+B[k,j]$ is the same as
the index $k$ minimizing $A[i,k]+\BB^{(t)}[k,j]$---which we have already found.
\end{proof}

Let $\lambda[u,v]$ denote the length of a shortest path
between $u$ and $v$, where the \emph{length} of a path refers
to the number of edges in the path.

For every $\ell$ that is a power of 3/2,
as in Section~\ref{sec:approx},
let $R_\ell\subseteq V$ be a subset of $\OO(n/\ell)$ vertices that hits all shortest paths of 
length $\ell/2$~\cite{zwickbridge}.
We may assume that $R_{(3/2)^i} \supseteq R_{(3/2)^{i+1}}$
(as before).
Set $R_1=V$.

For $S_1,S_2\subseteq V$, let $D(S_1,S_2)$ denote the submatrix of $D$
containing the entries for $(u,v)\in S_1\times S_2$.

\subparagraph*{Phase 1.}
We first solve the following subproblem for a given $\ell$: compute $D[u,v]$ for
all $(u,v)\in R_\ell\times V$ with $\lambda[u,v]\le\ell$.
(We don't know $\lambda[u,v]$ in advance.  More precisely, if $\lambda[u,v]\le \ell$, the 
computed value should be correct;
otherwise, the computed value is only guaranteed to be an upper bound.)

Suppose we have already computed $D[u,v]$ for
all $(u,v)\in R_{2\ell/3}\times V$ with $\lambda[u,v]\le 2\ell/3$,
and thus, by symmetry, for
all $(u,v)\in V\times R_{2\ell/3}$ with $\lambda[u,v]\le 2\ell/3$.

We take the Min-Plus product $D(R_\ell,R_{2\ell/3})\star D(R_{2\ell/3},V)$.
For each $(u,v)\in R_\ell\times V$, if its output entry is smaller than
the current value of $D[u,v]$, we reset $D[u,v]$ to the smaller value.
We reset all entries greater than $\ccc\ell$ to $\infty$.

To justify correctness, observe that 
for any shortest path $\pi$ of length between $2\ell/3$ and $\ell$,
the middle $(2\ell/3)/2=\ell/3$ vertices must contain a vertex of $R_{2\ell/3}$, which splits 
$\pi$ into two subpaths each of length 
at most $\ell/2+\ell/6 \le 2\ell/3$.

The computation takes time $\OO(\MM(n/\ell,n/\ell,n\mid 
\ccc\ell))$.
We do the above for all $\ell\le L$ that are powers of $3/2$ (in increasing order).

\subparagraph*{Phase 2.}
Next we solve the following subproblem for a given $\ell$: compute $D[u,v]$ for
all $(u,v)\in R_{2\ell/3}\times V$ (with no restrictions on $\lambda[u,v]$).  

Suppose we have already computed $D[u,v]$ for
all $(u,v)\in R_{\ell}\times V$, and thus, by symmetry,
for all $(u,v)\in V\times R_\ell$.

We take the Min-Plus product $D(R_{2\ell/3},R_{\ell})\star D(R_{\ell},V)$, keeping only 
entries bounded by $\ccc\ell$ in the first matrix.
For each $(u,v)\in V\times R_\ell$, if its output entry is smaller than
the current value of $D[u,v]$, we reset $D[u,v]$ to the smaller value.

To justify correctness, recall that 
for $(u,v)\in R_{2\ell/3}\times V$,
if $\lambda[u,v]\le 2\ell/3$, then $D[u,v]$ is already computed in Phase 1.
On the other hand, in any shortest path $\pi$ of length at least $2\ell/3$, the first $\ell/2$ vertices of the path must contain a vertex of $R_\ell$.

Observe that in the above product, (\ref{tri:ineq}) is satisfied due to the
triangle inequality, since the graph is undirected and the matrices 
$D(R_{2\ell/3},R_{\ell})$ and $D(R_{\ell},V)$ are true shortest path distances
(by the induction hypothesis).
Hence, by Lemma~\ref{lem:minplus:variant},
the computation takes time $\OO(\MM(n/\ell,n/\ell,n\mid 
\ccc\ell))$.
We do the above for all $\ell$ that are powers of $3/2$ (in decreasing order).

As before, standard techniques for generating witnesses for matrix products 
can be applied to
recover the shortest paths \cite{GalilMargalit,zwickbridge}. 

\subparagraph*{Total time.}
In both phases, the total cost is 
\begin{eqnarray*}
\OO\left(\max_{\ell}  \MM(n/\ell,n/\ell,n\mid 
\ccc\ell)\right) &= &\OO\left(\max_\ell \ccc\ell \cdot \MM(n/\ell,n/\ell,n)\right)\\
&\le& \OO\left(\max_\ell \ccc\ell^2  (n/\ell)^\omega\right)\ =\ \OO(\ccc n^\omega).
\end{eqnarray*}

\subparagraph*{Remarks.}
Like Shoshan and Zwick's result~\cite{shoshanzwick}, we
can also upper-bound the running time by $\OO(\MM(n,n,n\mid \ccc))$, which is tight,
since it is not difficult to reduce $\MM(n/\ell,n/\ell,n\mid \ccc\ell)$
to $\MM(n,n,n\mid\ccc)$.

The algorithm can be easily modified to solve APSP for a special kind of
directed graphs with weights in $[\ccc]$ that are ``approximately symmetric'',
i.e., $D[u,v]\le s\, D[v,u]$ for every pair $u,v$.  (We change (\ref{tri:ineq}) to
$B[k,j]\le O(s)(A[i,k]+A[i,k']) + B[k',j]$.)
The running time is $\OO(\ccc s n^\omega)$.
This rederives and extends a result by Porat et al.~\cite{PoratST16}, who considered such directed graphs in the unweighted case and obtained an $\OO(sn^\omega)$ time algorithm.

\begin{table}[ht]
    \centering
    \begin{tabular}{|p{0.3\textwidth}|p{0.7\textwidth}|}
    \hline
    Problem & Description\\
    \hline
    Min-Plus product     &  Given an $n\times m$ matrix $A$ and an $m\times p$ matrix $B$, compute the matrix $C$ with entries $C[i,j]=\min_{k=1}^m (A[i,k]+B[k,j])$. \\
    \hline
    $\MMp(n_1,n_2,n_3 \mid M)$ & Compute the Min-Plus product of an $n_1\times n_2$ matrix by an $n_2\times n_3$ matrix where both matrices have integer entries in $[M]$.\\
    \hline
    All-pairs shortest paths (APSP)      &  Compute all-pairs shortest paths distances in a graph. \\
    \hline
    APLP for DAGs & Compute all-pairs longest paths distances in a  directed acyclic graph. \\
    \hline
    $c$Red-APSP & Given a graph in which some edges can be colored red, compute for every pair of vertices $s,t$ the shortest path distance from $s$ to $t$ that uses at most $c$ red edges.\\
    \hline
    Min Witness Equality Product (MinWitnessEq) & Given two $n \times n$ integer matrices $A$ and $B$, compute a matrix $C$ with entries $C[i,j]=\min\{k \in [n]: A[i,k]=B[k, j]\}$. \\
    \hline 
    additive $f(D[u, v])$ approximate APSP & Given a graph where the distance between vertices $u$ and $v$ is $D[u,v]$, compute an estimate $D'[u,v]$ for every $u, v$ such that $D[u,v] \le D'[u,v] \le D[u,v] + f(D[u, v])$.\\
    \hline
    All-Pairs Lightest Shortest Paths (APLSP) & Given a graph, compute for every pair of vertices $s,t$  
the distance from $s$ to $t$ (with respect to the edge weights) and the smallest number of edges over all shortest paths from $s$ and $t$. \\
    \hline
    All-Pairs Shortest Lightest Paths (APSLP) & Given a graph, compute for every pair of vertices $s,t$  
the smallest number of edges of the paths from $s$ to $t$ (with respect to the edge weights) and the smallest length over all such paths from $s$ and $t$.\\
    \hline
    Lex$_2$-APSP & Given a graph where each edge $e$ is given two weights $w_1(e),w_2(e)$, compute for every pair of vertices $u,v$ the lexicographic minimum over all $u$-$v$ paths $\pi$ of $(\sum_{e\in\pi}w_1(e),\sum_{e\in \pi}w_2(e))$. \\
    \hline
    $\#$APSP & Given a graph, count the number of shortest paths for every pair of vertices in a graph.\\
    \hline
    \CountMod{U} & Given a graph, count the number of shortest paths module $U$ for every pair of vertices in a graph.\\
    \hline
     \CountCap{U} & Given a graph, compute the minimum between the number of shortest paths  and $U$ for every pair of vertices in a graph.\\
    \hline
    \CountApx{U} & Given a graph, compute a $(1+1/U)$-approximation of the number of shortest paths for every pair of vertices in a graph.\\
    \hline
    Betweenness Centrality (BC) & Given a graph and a vertex $v$, compute $\text{BC}(v)=\sum_{s,t\neq v} C_v[s,t]/C[s,t]$, where $C[s,t]$ is the number of shortest paths between $s$ and $t$, and $C_v[s,t]$ is the number of shortest paths between $s$ and $t$ that go through $v$.\\
    \hline
    \end{tabular}
    \caption{The problems we consider. For graph problems, we sometimes add a prefix to the problem to constraint the edge weights of the graph. The prefix ``u-'' is for unweighted graphs (e.g. u-APSP);
the prefix ``\ppM-'' is for graphs with weights in \ppM~(e.g. \ppM-APSP), similarly for ``\pmM-'' and for other ranges. All graph problems are defined on directed graphs unless otherwise specified. For instance, u-APSP stands for unweighted directed APSP while u-APSP for undirected graphs stands for unweighted undirected APSP.}
    \label{tab:my_label}
\end{table}

\end{document}